\theoremstyle{plain}
\newtheorem{theorem}{Theorem}
\newtheorem{proposition}[theorem]{Proposition}
\newtheorem*{proposition*}{Proposition}
\newtheorem*{corollary*}{Corollary}
\newtheorem{lemma}[theorem]{Lemma}
\newtheorem*{theorem*}{Theorem}
\newtheorem*{lemma*}{Lemma}
\newtheorem*{conjecture*}{Conjecture}
\newtheorem*{question*}{Question}
\theoremstyle{definition}
\newtheorem*{exercise*}{Exercise}
\theoremstyle{remark}
\newtheorem*{remark*}{Remark}
\newtheorem{remsTh}[theorem]{Remarks}
\newcommand{\subclass}[1]{}
\newcommand{\enumTi}[1]{\renewcommand{\theenumi}{#1}}
\newcommand{\alphenumi}{\enumTi{\alph{enumi}}}
\newcommand{\romenumi}{\enumTi{\roman{enumi}}}
\newcommand{\lt}{\left}
\newcommand{\rt}{\right}
\newcommand{\abs}[1]{{\lt\lvert{#1}\rt\rvert}}
\newcommand{\nfrac}[2]{{\nicefrac{#1}{#2}}}
\newcommand{\NN}{\mathbb{N}}
\newcommand{\RR}{\mathbb{R}}
\DeclareMathOperator{\Prb}{\mathbf{P}}
\DeclareMathOperator{\Exp}{\mathbf{E}}
\DeclareMathOperator{\IndicatorOp}{\mathbf{I}}
\newcommand{\Ind}{\IndicatorOp}
\newlength{\algotabbingwidth}
\newcommand{\xx}{\mathtt{x}}
\begin{document}
\title[Random regular SAT]{On the satisfiability of random regular signed SAT formulas}%
\author{Christian Laus}%
\author{Dirk Oliver Theis}%
\address{DOT \& CL: Fakult\"at f\"ur Mathematik\\
  Otto-von-Guericke-Universit\"at Magdeburg\\
  Universit\"atsplatz~2\\
  39106~Magdeburg\\
  Germany.  \href{http://dirkolivertheis.wordpress.com}{\rm http://dirkolivertheis.wordpress.com}}
\email{theis@ovgu.de}%

\begin{abstract}
  Regular signed SAT is a variant of the well-known satisfiability problem in which the variables can take values in a fixed set $V \subset [0,1]$, and the literals
  have the form ``$\xx \le a$'' or ``$\xx \ge a$'' instead of ``$\xx$'' or ``$\bar\xx$''.
  
  We answer some open question regarding random regular signed $k$-SAT formulas: The probability that a random formula is satisfiable increases with $\abs{V}$; there
  is a constant upper bound on the ratio $m/n$ of clauses~$m$ over variables~$n$, beyond which a random formula is asypmtotically almost never satisfied; for $k=2$
  and $V=[0,1]$, there is a phase transition at $m/n=2$.
  \\
  \textbf{Keywords:} Random constraint satisfaction problems; multi-valued logic; variants of SAT
\end{abstract}

\date{Tue Dec  6 18:05:36 CET 2011}

\maketitle

\section{Introduction}\label{sec:intro}

Let $V$ be a set with at least two elements, $\mathcal S$ a set of subsets of $V$ called \textit{signs}, and $k$ a positive integer.  For the \textit{signed
  $k$-satisfiability problem,} or \textit{signed $k$-SAT,} one is given as input a set of~$n$ variables $X$ and a formula in \textit{signed conjunctive normal form.}
This means that there is a list of~$m$ \textit{clauses}, each of which is a conjunction ($\land$) of $k$~\textit{(signed) literals} of the form ``$\xx \in
\texttt{S}$'' where $\xx$ is a variable in $X$ and the ``sign'' $\texttt{S}$ is a set in $\mathcal S$.  The question is then whether there exists a satisfying
\textit{interpretation,} i.e., an assignment of values to the variables such that each of the clauses is satisfied.

Historically, signed SAT originated in the area of so-called multi-valued logic, where variables can take a certain number of truth values, not just 0 or~1.  This is
why the set~$V$ is called the \textit{truth-value set.}  We refer the reader to the survey paper~\cite{BeckertHaehnleManya99}, and the references therein.

In the signed $k$-SAT area, the variants where the literals are inequalities have received special attention (see
also~\cite{HochbaumMorenocenteno08,BallersteinTheis13}).  One speaks of \textit{regular} signed $k$-SAT or just \textit{regular $k$-SAT} (\textit{$k$-rSAT} for
short), if~$V$ is a (linearly) ordered set, and the allowed signs are $\{ x \mid x \le a\}$ and $\{ x \mid x \ge a\}$, $a\in V$.  The literals then have the form
``$\xx \le a$'' or ``$\xx \ge a$''.  Clearly, the satisfiability of the formula depends only on~$v := \abs{V} \in \{2,3,\dots,\infty\}$ rather than on the set~$V$
itself, so we always assume that $V \subset\RR$ and $\min V = 0$, and $\max V = 1$.

This setting includes as a special case the classical satisfiability (SAT) problem: choose for~$V$ the 2-element set $\{0,1\}$, and use the signed literals $\xx \ge
1$ and $\xx \le 0$ to represent the classical SAT literals $\xx$ and $\bar x$, respectively.

This paper is about $k$-rSAT formulas drawn at random from all such formulas with a fixed truth-value set~$V$.  We allow that either $v < \infty$, or, in the limit,
$V = [0,1]$.
These random formulas are studied for $m=cn$, for a fixed constant~$c$, in the limit $n\to \infty$.
Literals $\xx \le 1$ and $\xx \ge 0$ are \textit{innocuous:} they only affect the number of clauses which remain to be satisfied.
We will drawn formulas uniformly at random from the set of all signed $k$-SAT formulas with $n$~literals, $m$~clauses and truth-value set~$V$, which do not contain
any innocuous literal.\footnote{%
  If innocuous literals are allowed in a random formula, the number of clauses which contain an innocuous literal is distributed like a binomial variable with
  parameters $m=\Theta(n)$ and $1/v=O(1)$, and as such is with high probability $O(\sqrt n)$, so that the ratio $c=\nfrac{m}{n}$ is unaffected for $n\to\infty$.
  Hence, forbidding innocuous literals does not change asymptotic results.
}%

Based on computational investigations of the satisfiability of uniformly generated random 3-rSAT instances, Many\`a et al.~\cite{ManyaBejarEscaladaimaz98} have made
a number of observations and conjectures.  Most importantly, they observed a phase transition phenomenon similar to the one in classical SAT (see,
e.g.,~\cite{Friedgut99,AchlioptasPeres2004} and the references therein).  They interpret their results as supporting the existence of a threshold $c = c_k(v)$, for
$k=3$, with the following properties:
\begin{itemize}
\item[(i)] the most computationally difficult instances tend to be found when the ratio $\nfrac{m}{n}$ is close to~$c_k(v)$;
\item[(ii)] there is a sharp transition from satisfiable to unsatisfiable instances when the ratio $\nfrac{m}{n}$ crosses the threshold;
\item[(iii)] $c_k(v)$ is nondecreasing in the number of truth-values~$v$.
\end{itemize}
Their results are confirmed and extended by other papers exploring uniformly random 3-rSAT
instances~\cite{BejarManya99phase,BeckertHaehnleManya99,BejarManyaCabiscolFernandezGomes07}.
From their computational data, B\'ejar et al.~\cite{BejarManya99phase,BejarManyaCabiscolFernandezGomes07} surmise that
\begin{enumerate}[(i)]
\item[(iv)] the threshold $c_k(v)$ increases logarithmically with~$v$,
\end{enumerate}
and prove that for
\begin{equation}\label{eq:BejarManya-bound}
  c > \log_{\nfrac87}(v)
\end{equation}
a random 3-rSAT formula with $\nfrac{m}{n} = c$ is asymptotically almost never (a.a.n., as $n\to\infty$) satisfiable.
Their proof (and bound) resembles that for classical $k$-SAT ($2^k \log 2$).

\subsection*{Our contributions.}
In this paper, we prove~(ii) for $k=2$ and $V=[0,1]$; establish~(iii) for all~$k$; improve the bound~\eqref{eq:BejarManya-bound} for large~$v$; and falsify~(iv).

To elaborate, for~(iii), we show that the probability that a random formula is satisfied increases with~$v$.  In particular, the probability that a random formula on
a finite truth-value set is satisfiable is bounded from above by the probability that a random formula with $V=[0,1]$ is satisfiable.  Thus, if $c_3(v)$ increased
logarithmically with~$v$, then for any finite~$c$, a random formula with truth-value set~$[0,1]$, $n$ variables and $m=cn$ clauses would be asymptotically almost
surely (a.a.s.)\ satisfiable.

We then prove the following.

\begin{theorem}\label{thm:general-ub}
  If~$c > 1$ is such that 
  \begin{equation*}
    kc\lt( 1 - 2^{-k} \rt)^{c-1} < 1,
  \end{equation*}
  then a random $k$-rSAT formula with~$n$ variables, $m=cn$ clauses, and $V = [0,1]$ is a.a.n.\ satisfiable.
\end{theorem}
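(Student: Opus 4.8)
The plan is to bound the satisfiability probability by a first--moment (union--bound) argument over a finite, canonically chosen family of candidate assignments. The starting observation is a symmetry special to $V=[0,1]$: for a single random literal on a variable $\xx_i$ (uniform direction, threshold uniform on $[0,1]$) and \emph{any} fixed value $x_i$, the literal ``$\xx_i\ge a$'' holds with probability $x_i$ and ``$\xx_i\le a$'' with probability $1-x_i$, so averaging over the direction the literal is satisfied with probability exactly $\tfrac12$, independently of $x_i$. Hence for any assignment $x$ chosen independently of the formula the clauses are independent and $\Prb[x\models\Phi]=(1-2^{-k})^{m}$. The continuum of assignments is finitized by noting that the thresholds of the literals on $\xx_i$ cut $[0,1]$ into cells on which every literal has constant truth value, so satisfiability is equivalent to the existence of a satisfying cell.

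Next I would prove a canonicalization lemma: if $\Phi$ is satisfiable, then it has a satisfying assignment in which every variable that occurs in some literal is set equal to a threshold of one of its own literals. Indeed, starting from any satisfying $x$ and moving a coordinate $x_i$ to the left endpoint of its cell (or, if $x_i$ lies below all thresholds on $\xx_i$, up to the smallest such threshold) changes the truth value of no literal on another variable and only adds true literals on $\xx_i$; doing this coordinate by coordinate preserves satisfaction and lands every coordinate on a threshold. This simultaneously makes the candidate set finite and guarantees that, for each such variable, the literal realizing its value is satisfied with equality, so the clause containing that literal is automatically satisfied.

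Let $N$ count the satisfying threshold--assignments; then $\Prb[\Phi\text{ sat}]\le\Exp[N]$. To compute $\Exp[N]$ I would encode a threshold--assignment by a pointer choosing, for each variable $i$, one literal occurrence $s_i$ on $\xx_i$ whose threshold becomes the value $x_i=\theta_i$. Conditioning on the event that the designated slots $s_i$ indeed fall on the variables $i$ (probability $1/n$ per slot, independent across slots), the key point is that the $\tfrac12$--symmetry survives: every non--pointer literal, evaluated at the now--fixed value $\theta_i$, is still true with probability exactly $\tfrac12$ and independently of the others, since $\tfrac12\theta_i+\tfrac12(1-\theta_i)=\tfrac12$ for every $\theta_i$. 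The at most $n$ pointer literals lie in at most $n$ clauses, which are free; each of the remaining at least $m-n$ clauses contains only non--pointer literals and is therefore satisfied with probability $(1-2^{-k})$, independently. Summing over the at most $(km)^{n}=(kcn)^{n}$ choices of a pointer and multiplying by $(1/n)^{n}$ and by $(1-2^{-k})^{m-n}$ gives
\[
  \Exp[N]\ \le\ (kc)^{n}\,(1-2^{-k})^{(c-1)n}=\Bigl(kc\,(1-2^{-k})^{c-1}\Bigr)^{n},
\]
which tends to $0$ under the hypothesis; isolated variables occur in no clause and only improve the bound, so they may be dropped.

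The step I expect to be delicate is controlling the correlation between the candidate assignment and the formula: the coordinates of the candidate are thresholds drawn from the very literals whose satisfaction we must estimate. The exact $\tfrac12$--symmetry is what rescues the computation, since it holds for every pinned threshold value and hence lets the non--pointer literals be treated as independent fair coins even after conditioning. The other point requiring care is the sharp exponent $c-1$: it is precisely the canonicalization lemma (one threshold, hence one freed clause, per variable) that reduces the $m$ available clauses to $m-n=(c-1)n$ independent ones; a cruder union bound that also lets a variable sit below all its thresholds produces an extra additive term and only yields a strictly stronger condition. The hypothesis $c>1$ enters here, to ensure $m-n>0$.
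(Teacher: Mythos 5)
Your proposal is correct and follows essentially the same route as the paper: a first--moment bound over ``tight'' (threshold--valued) assignments, using the exact $\nfrac12$--symmetry of a random literal at any pinned value, discarding the at most $n$ clauses containing the pointer literals to get the exponent $c-1$, and bounding the number of pointer choices by $(kc)^n$. The only cosmetic difference is that the paper conditions on the occurrence vector $R$ and bounds $\Exp\prod_j R_j \le (km/n)^n$, whereas you union--bound directly over slot tuples; these are the same computation.
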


This improves on B\'ejar et al.'s~\cite{BejarManyaCabiscolFernandezGomes07} bound~\eqref{eq:BejarManya-bound} for large values of~$v$.  Most notably, it gives a
finite upper bound for all~$v$ and thus disproves~(iv).  In particular, Theorem~\ref{thm:general-ub} implies the following.

\begin{corollary*}%
  For all~$V$, a random $3$-rSAT formula with~$n$ variables and~$m=cn$ clauses is a.a.n.\ satisfiable, if $c \ge 36.1$.
  \qed
\end{corollary*}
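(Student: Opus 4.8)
The plan is to derive this directly from Theorem~\ref{thm:general-ub} together with the monotonicity in the number of truth values (property (iii)) established earlier in the introduction, so almost all of the work has already been done. I would organize it in three short steps.

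First I would reduce to the case $V=[0,1]$. By the monotonicity result, for every \emph{finite} truth-value set $V$ the probability that a random $3$-rSAT formula with $n$ variables and $m=cn$ clauses is satisfiable is bounded above by the corresponding probability for $V=[0,1]$. Since a.a.n.\ satisfiability just means this probability tends to $0$, and an upper bound that tends to $0$ forces the smaller quantity to tend to $0$ as well, it suffices to prove the statement for $V=[0,1]$; the conclusion for \emph{all} $V$ then follows at once. This reduction is the only genuinely conceptual ingredient.

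Next I would apply Theorem~\ref{thm:general-ub} with $k=3$. For $k=3$ its hypothesis reads
\begin{equation*}
  g(c) := 3c\lt(1-2^{-3}\rt)^{c-1} = 3c\lt(\tfrac78\rt)^{c-1} < 1,
\end{equation*}
so the remaining task is purely to verify that $g(c)<1$ holds at $c=36.1$ and, because the corollary claims the bound for \emph{every} $c\ge 36.1$, that it continues to hold for all larger $c$. A direct evaluation gives $g(36.1)=3\cdot 36.1\cdot(7/8)^{35.1}\approx 0.998<1$, which is exactly why the threshold is placed at $36.1$.

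Finally comes the routine monotonicity check on $g$. I would note that $\log g(c)=\log 3+\log c+(c-1)\log\tfrac78$ has derivative $1/c-\log\tfrac87$, which is negative as soon as $c>1/\log\tfrac87\approx 7.49$; hence $g$ is strictly decreasing on $[36.1,\infty)$ and $g(c)\le g(36.1)<1$ throughout. Thus Theorem~\ref{thm:general-ub} applies for every $c\ge 36.1$ and gives a.a.n.\ satisfiability for $V=[0,1]$, which by the first step yields the claim for all $V$. I do not expect any real obstacle here; the only point requiring care is that the theorem's hypothesis is stated for a \emph{fixed} $c$, so one must explicitly confirm that the inequality is inherited by all $c\ge 36.1$ rather than merely at the endpoint. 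The decreasing behaviour of $g$ supplies this cleanly; alternatively one could argue that adding clauses can only lower the satisfiability probability.
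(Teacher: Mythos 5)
Your proposal is correct and is exactly the argument the paper intends: the corollary is stated with no written proof because it follows immediately from Theorem~\ref{thm:general-ub} with $k=3$ combined with the monotonicity in $v$ from Lemma~\ref{lem:p_monotone}, plus the numerical check that $3c(7/8)^{c-1}<1$ at $c=36.1$ and is decreasing beyond. Your explicit verification that $g$ is decreasing for $c>1/\log(8/7)\approx 7.49$ is a worthwhile detail the paper leaves implicit.
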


We then move on to study 2-rSAT.  Here, Theorem~\ref{thm:general-ub} gives an upper bound of apx.~12.664 beyond which a random 2-rSAT is a.a.n.\ satisfiable.  To
prove a lower bound beneath which such a formula is satisfiable, we use a result by Chepoi et al.~\cite{ChepoiCreignouHermannSalzer10}, who prove a characterization
of non-satisfiable signed 2-SAT instances based on a digraph certificate, in the spirit of Aspvall, Plass, and Tarjan's famous result for classical
2-SAT~\cite{AspvallPlassTarjan79}.  Using Chepoi et al.'s characterization we prove the following.

\begin{theorem}\label{thm:2-lb}\label{thm:2-ub}
  A random $2$-rSAT formula with~$n$ variables, $m=cn$ clauses, and $\abs{V} = \infty$ is
  \begin{enumerate}[(a)]
  \item a.a.s.\ satisfiable, if $c < 2$, and
  \item a.a.n.\ satisfiable, if $c > 2$.
  \end{enumerate}
\end{theorem}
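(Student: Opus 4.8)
The plan is to combine Chepoi et al.'s digraph characterization with a moment analysis, the point being that, for $V=[0,1]$ (thresholds uniform on $[0,1]$), the continuum of thresholds contributes an extra factor $\tfrac12$ per implication, and this is exactly what shifts the satisfiability threshold from the classical $2$-SAT value $1$ to $2$. First I would recast the characterization in a convenient form. To each formula associate the \emph{implication digraph} on the $2n$ nodes $\{\xx_i^{>},\xx_i^{<}\}_{i=1}^n$, where $\xx_i^{>}$ carries a lower bound ``$\xx_i\ge s$'' and $\xx_i^{<}$ an upper bound ``$\xx_i\le t$''. Each clause contributes two directed, threshold-labelled edges, one per implication; e.g.\ $(\xx\le a)\lor(\yps\ge b)$ yields $\xx^{>}\to\yps^{>}$ with label $(a,b)$ and $\yps^{<}\to\xx^{<}$ with label $(b,a)$. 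Along a directed walk the labels compose, and an edge \emph{fires} only when the incoming bound is strong enough to trigger the implication: a walk reaching $\xx^{>}$ at level $s$ fires the first edge above only if $s>a$, after which $\yps$ is constrained at the \emph{fresh} level $b$. By (the reformulated) Chepoi et al., the formula is unsatisfiable iff the digraph contains a \emph{contradictory closed walk}: one all of whose edges fire and which forces some variable to obey both $\xx_0\ge s$ and $\xx_0\le t$ with $s>t$. The crucial observation is that once the first edge has fired, each subsequent edge fires precisely when one fresh uniform threshold exceeds another, i.e.\ with probability $\tfrac12$ and independently of the rest; hence a walk of length $\ell$ is active with probability $(\tfrac12)^{\ell}(1+o(1))$.

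For part (a), $c<2$, I would bound the probability of unsatisfiability by the first moment of the number of contradictory walks. A walk of length $\ell$ visits $\approx\ell$ variables ($\le n^{\ell}$ choices), each of its $\ell$ edges is realised by some clause with probability $\Theta(1/n)$, and it is active with probability $(\tfrac12)^{\ell}$; the counting and edge-realisation factors combine to $c^{\ell}$ exactly as in classical $2$-SAT, so the active-walk expectation scales like $(c/2)^{\ell}$. Since $\sum_{\ell}(c/2)^{\ell}/\ell$ is merely \emph{finite} (not $o(1)$) for $c<2$, the naive count does not suffice---this is the familiar obstruction, short contradictory cycles occurring in expectation. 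I would resolve it as Chv\'atal--Reed do for classical $2$-SAT: replace cycles by the minimal unsatisfiable ``bicycle/snake'' certificate adapted to the signed digraph, whose expected number carries extra vanishing factors and tends to $0$ once $c<2$; the first moment then yields a.a.s.\ satisfiability.

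For part (b), $c>2$, here $c/2>1$, so the active-implication process is supercritical and I would show a contradictory walk exists a.a.s. The two natural routes are a second moment computation on the (now divergent) count of contradictory closed walks, or an equivalent exploration argument: starting from a random $\xx_0^{>}$ and following only firing edges, the expected offspring is $c/2>1$, so the active-reachable set grows to linear size a.a.s., and one argues it must then contain $\xx_0^{<}$ at a contradictory level (and symmetrically), producing the certificate.

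The hard part will be precisely this direction. The branching is \emph{type-dependent}: each node's offspring distribution depends on its current threshold level, so the naive independence breaks, and the second moment must control correlations between overlapping walks in the presence of the continuum of threshold types. I expect this to require coupling the threshold-labelled exploration with a multitype Galton--Watson process whose mean operator has spectral radius $c/2$, together with a two-round exposure (sprinkling) argument to preclude the active giant component from avoiding all contradictory pairs. Establishing that the critical point of this multitype process is exactly $c=2$ (rather than merely some constant) is the step I view as the main obstacle, and it is also what pins the threshold to the matching value claimed in the theorem.
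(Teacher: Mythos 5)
Your overall strategy coincides with the paper's: Chepoi et al.'s Aspvall--Plass--Tarjan-style characterization, turned into Chv\'atal--Reed-style moment computations on bicycle/snake certificates. Your part (a) is essentially the paper's proof --- a first moment over $\ell$-bicycles rather than cycles, with the extra $\nfrac1n$ coming from the certificate having one more clause than distinct variables --- and the device the paper uses to make ``each edge is realised with probability $\Theta(1/n)$'' rigorous is to condition on the occurrence counts $R_j$ and count perfect matchings of variable copies, configuration-model style.

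Where you stop short is part (b), and the obstacle you single out there --- a level-dependent multitype branching process whose critical point must be pinned to $c=2$ --- is not actually present in the first of the two routes you name. Running the second moment directly on a fixed contradictory certificate (the paper's $\ell$-snake: a closed walk through $\ell$ distinct variables returning to its midpoint, with $\ell = 2\lceil \log n/\log(\nfrac{c}{2})\rceil$), the only threshold-dependent events are, at each variable of the certificate, ``the two constraint parts sitting on that variable are disjoint''. Both constraint parts are fresh independent uniforms, so each such event has probability exactly $\nfrac14$ and the events at distinct variables are independent; your ``probability $\nfrac12$ per firing edge'' is the same count once the $\nfrac12$ for the orientation of the relation has been charged to the existence of the directed edge. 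No exploration, no spectral radius: the first moment is $\Theta\bigl((\nfrac{c}{2})^\ell/n\bigr)=\Theta(n)$ by the choice of $\ell$, and the second factorial moment is handled by the standard observation that only variable-disjoint pairs of snakes contribute non-negligibly (as in \S9.2 of Janson, {\L}uczak and Rucinski), after which Chebyshev gives a.a.n.\ satisfiability. So the missing step is not identifying the critical value --- your heuristic $(\nfrac{c}{2})^\ell$ already does that --- but simply carrying out a second-moment computation that decouples completely once the certificate is fixed; the correlations you fear from the ``continuum of types'' do not arise because shared clauses force shared thresholds and unshared clauses carry independent ones.
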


The improved upper bound here comes from a concentration result.  This theorem shows that, as for classical random $k$-SAT, $k$-rSAT exhibits a threshold behaviour
if $k=2$.

The main difficulty in the last theorem for 2-rSAT, as compared to classical 2-SAT, comes from the fact that there is an infinite number of possible literals --- as
opposed to the total of $2n$ possible literals for classical SAT.  In our proof, we make use of the following trick.  When conditioning on the number of times~$R_j$
each variable~$\xx_j$ occurs, the structure one needs to analyze has some resemblance to the configuration model for random (multi) graphs with prescribed
degrees~$R_j$.  This way, Chv\'atal and Reed's~\cite{ChvatalReed92} approach for classical 2-SAT can be adapted.

\subsection*{Organization of the paper.}

The remainder of the paper is organized as follows.  In the next section, we discuss the random model, and variants of it, in the necessary details and prove the
monotonicity of the probability of satisfiability mentioned above.  Section~\ref{sec:pf-general-ub} contains the proof of Theorem~\ref{thm:general-ub}.  Sections
\ref{sec:pf-2-lb} and~\ref{sec:pf-2-ub} hold the proof of Theorem~\ref{thm:2-lb}.  In the final section, we discuss a few open questions.

\section{Basics about random $k$-{\textrm{\rm r}}SAT}\label{sec:basics}

In this section, we discuss variants of the random model which we need.  Then we will prove some basic facts about $k$-rSAT.

We will think of a random formula as being constructed as follows.  First of all, we assume that the truth-value set $V$ is a subset of the unit interval $[0,1]$
which is symmetric (i.e., $V = 1-V$) and which contains both $0$ and~$1$.

Now we take an ``empty'' formula, i.e., we have $m\times k$ empty slots.  Each slot will be filled by a triple $(\xx,\varrho,a)$ where $\xx$ is one of the variables,
$\varrho$ is a comparison relation ``$\le$'' or ``$\ge$'', and $a$ is in $V\setminus\{1\}$.  The interpretation of such a triple is that, if
$\varrho=\mbox{``$\le$''}$, then we have the condition $\xx \le a$, whereas, if $\varrho=\mbox{``$\ge$''}$, we have the condition $\xx \ge 1-a$.  By this
construction, we exclude the cases of the inequalities $\xx \ge 0$ and $\xx \le 1$, which are meaningless because they do not constrain~$\xx$.  The part
$(\varrho,a)$ is referred to as the constraint part of the literal.

For each slot, the three parts of the triple are chosen independently from each other.  The selection of the right hand sides and the comparison relations is done
independently for all slots.
For the variables, there are several possibilities.
First of all, their selection may either be chosen independently for all slots (i.e., allowing a clause to contain more than one slot with the same variable), or
indepdently for all slots but conditioning on the~$k$ variables occuring in a clause being distinct.  In the second case, the event on which we condition is
asymptotically bounded away from~$0$.  Hence, as far as a.a.s.\ statements about random formulas are concerned, the two possibilities for the random selection of the
variables are equivalent.
We denote by $F_k(n,m,v)$ a random $k$-rSAT formula with truth-value set of cardinality~$v$ in which, for each clause, the variables in the slots are distinct; by
$F'_k(n,m,v)$, we denote a random formula where the variables can occur multiple times in the same clause.

Secondly, we may choose the variables conditioning on the number of times each variable occurs in the formula.  For a random formula~$F$, let the random
variable~$R_j$ denote the number of slots containing variable~$\xx_j$.  Clearly, we have
\begin{equation}\label{eq:sum-var-occ}
  \sum_{j=1}^n R_j = km.
\end{equation}
If, when choosing the variables, we allow a clause to contain more than one slot with the same variable, then the $R := (R_j)_{j=1,\dots,n}$ has
multinomial distribution, i.e., for all $r\in \NN^n$ with $\sum_j r_j = km$ we have
\begin{equation}\label{eq:r-distrib}
  \Prb[ R = r ] = \frac{\binom{km}{r_1,\dots,r_n}}{n^{km}}.
\end{equation}
This is same as the $R_j$ being independent Poison with mean $km/n$ conditioning on~\eqref{eq:sum-var-occ}.

When constructing a random formula, we may reverse this view: We may condition on the values of~$R$.  This amounts to pretending that, for $j=1,\dots,n$, there are
$R_j$ distinguishable copies of variable~$\xx_j$, and the $km$ variable copies are assigned to the $km$ slots randomly.


\subsection*{Monotonicity}
We now come the some basic facts about random $k$-rSAT formulas.  We start the monotonicity property of rSAT formulas mentioned in the introduction.
Denote by
\begin{equation}\label{eq:def-p}
  p_k(n,m,v) := \Prb[ F_k(m,n,v) \text{ is satisfiable}]
\end{equation}
the probability that a random $k$-rSAT formula with $m$ clauses on $n$ variables and truth-value set of cardinality $\abs{V}$ is satisfiable.  We will habitually omit
the~$k$.  Naively speaking, increasing $\abs{V}$ increases the possible choices for the variables, so we would guess that $p(n,m,v)$ increases with~$v$.  That is in
fact the case.

The easiest setting in which we can visualize this phenomenon is, if we suppose that the right hand sides are of the form
\begin{equation*}
  A = \sum_{i=1}^\lambda B_i 2^{-i},
\end{equation*}
where the $B_i$, $i=1,2,3,\dots$ are independent Bernoulli random variables with $\Prb[B_i = 1] = \nfrac12$, and $\lambda$ is either finite --- in which case
$\abs{V} = 2^{\lambda}+1$ --- or $\lambda=\infty$, in which case $V=[0,1]$.  Now note that increasing~$\lambda$ increases~$A$.  But this weakens the inequalities
constraining the variables, and thus makes the formula ``more satisfiable''.
An only slightly more technical argument proves this monotonicity fact for general $\abs{V}$.

\begin{lemma}\label{lem:p_monotone}
  For every $k,n,m$, the following hold.
  \begin{enumerate}[(a)]
  \item For every $v$, we have $\displaystyle p_k(n,m,v) \le p_k(n,m,v+1)$.
  \item For every $v$, we have $\displaystyle p_k(n,m,v) \le p_k(n,m,\infty)$.
  \item We have $\displaystyle \lim_{v\to\infty} p_k(n,m,v) = p_k(n,m,\infty)$.
  \end{enumerate}
\end{lemma}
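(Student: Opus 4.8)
The plan is to prove all three parts by coupling arguments. Throughout I represent a truth-value set of size $v$ by the uniform grid $G_v=\{0,\tfrac1{v-1},\tfrac2{v-1},\dots,1\}$; this is symmetric and, since satisfiability depends only on $\abs V$, yields the same probability $p_k(n,m,v)$. I write $F_v$ for $F_k(n,m,v)$ and $F_\infty$ for the $V=[0,1]$ instance. Two elementary monotonicity principles drive everything. First, \emph{rounding down weakens controllably}: if for each slot the right-hand side $a$ is replaced by a smaller value $a'\le a$, then every literal becomes more restrictive (``$\xx\le a'$'' is tighter, and ``$\xx\ge 1-a'$'' is tighter because $1-a'\ge 1-a$), so the formula becomes harder to satisfy. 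Second, \emph{enlarging the value set helps for free}: if the list of literals is left unchanged but the variables may range over a larger set, then any satisfying assignment of the small instance still satisfies the large one, so satisfiability can only increase.

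For part~(b) I couple $F_v$ with $F_\infty$ on one probability space: each slot gets the same variable and relation, a right-hand side $a\sim\text{Unif}[0,1)$ for the $V=[0,1]$ instance, and $a_v:=\lfloor a\rfloor_{G_v}$ (round down to the grid) for the finite instance. One checks that $a_v$ is uniform on $G_v\setminus\{1\}$, so the finite marginal is correct, while $a_v\le a$ makes every literal of $F_v$ at least as restrictive as the corresponding literal of $F_\infty$. Since the $F_v$-values lie in $G_v\subseteq[0,1]$, the first principle gives $\{F_v\text{ satisfiable}\}\subseteq\{F_\infty\text{ satisfiable}\}$ on this space, hence $p_k(n,m,v)\le p_k(n,m,\infty)$.

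For part~(c) I keep the same coupling and estimate $p_k(n,m,\infty)-p_k(n,m,v)=\Prb[F_\infty\text{ sat},\,F_v\text{ unsat}]$. Since $n$ and $m$ are fixed, the formula has only $km$ literals and its satisfiability is decided by finitely many inequalities among the $km$ continuous right-hand sides; the configurations at which an arbitrarily small tightening flips satisfiability form a finite union of lower-dimensional sets and so have probability zero. As $a_v\uparrow a$ slotwise when $v\to\infty$, off this null event the indicator of ``$F_v$ satisfiable'' converges to that of ``$F_\infty$ satisfiable,'' and bounded convergence yields $\lim_{v}p_k(n,m,v)=p_k(n,m,\infty)$.

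Part~(a) is the delicate one. Using the second principle I would first reduce $p_k(n,m,v)\le p_k(n,m,v+1)$ to a statement with a \emph{common} value set: embed $G_v$ into $G_{v+1}$ and compare the instance whose thresholds are drawn on the embedded $v$-level sublattice, but whose variables range over all of $G_{v+1}$, with the genuine $F_{v+1}$; the first inequality is free, and the remaining gap is purely about the threshold law. I expect the main obstacle here: a naive ``weaken every literal'' coupling cannot work, because the uniform laws on the $\le$-thresholds of $G_v$ and of $G_{v+1}$ are \emph{not} stochastically ordered (adding one grid point shifts mass the wrong way), and the two relation types pull the single extra value in opposite directions. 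The extra satisfiability must instead be extracted from the additional assignment value, which is a global feature and does not factor over slots. I would resolve this by exploiting the reflection symmetry $x\mapsto 1-x$ of the model, under which the two relation types are interchanged: adding a symmetric \emph{pair} of values balances the high new value, favourable to ``$\le$''-literals, against the low new value, favourable by symmetry to ``$\ge$''-literals, which should yield $p_k(n,m,v)\le p_k(n,m,v+2)$; the single step $p_k(n,m,v)\le p_k(n,m,v+1)$ is then recovered by interposing the self-symmetric midpoint $\tfrac12$ and matching parities. The technical heart is exactly to build a coupling that simultaneously reproduces the independent product law of the larger instance and transfers a satisfying assignment clause by clause, since the benefit of the extra value is global and cannot be localized to single slots.
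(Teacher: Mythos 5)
Your parts (b) and (c) are correct and are essentially the paper's own argument in different notation: the paper couples $F_k(n,m,2^\lambda+1)$ with $F_k(n,m,\infty)$ by deferred decisions on a dyadic expansion of the right-hand side (exposing more bits only increases the parameter $a$, hence only weakens both literal types ``$\xx\le a$'' and ``$\xx\ge 1-a$''), and proves (c) by observing that almost surely the relative order of the constraint parts stabilizes at a finite truncation level. Your round-down coupling on the grid $G_v$ is the same idea; it is even slightly cleaner in that it handles every $v$ directly rather than only the subsequence $v=2^\lambda+1$ and so does not need to invoke (a).

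Part (a), however, is not proved. What you submit is a plan whose ``technical heart'' you explicitly leave open, and the final assembly step does not close. To their credit, your diagnosis of the obstacle is real: a single order-embedding of $G_v$ into $G_{v+1}$ cannot make both threshold laws move in the favourable direction simultaneously (the ``$\le$''-thresholds must be pushed up in rank while the ``$\ge$''-thresholds must be pushed down, and a one-element extension of a chain cannot do both), so a literal-by-literal ``weaken everything'' coupling is genuinely unavailable. Indeed the paper's own one-paragraph proof of (a) --- rescale the parameter $a$ by $(v-1)/v$, then bump $u/v$ to $(u+1)/v$ with probability $(u+1)/v$ --- quietly assumes that the rescaling step preserves satisfiability, which fails for exactly the reason you identify: over $V=\{0,\tfrac12,1\}$ the satisfiable pair ``$\xx\le\tfrac12$'', ``$\xx\ge\tfrac12$'' rescales to the contradictory pair ``$\xx\le\tfrac13$'', ``$\xx\ge\tfrac23$'' over $\{0,\tfrac13,\tfrac23,1\}$, and with positive probability neither parameter is bumped. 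So you have correctly located a difficulty that the paper glosses over.

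But your proposed repair does not constitute a proof. The claim $p_k(n,m,v)\le p_k(n,m,v+2)$ via adding a symmetric pair of values is only asserted, not established (the threshold \emph{law} changes along with the domain, so a coupling is still required and is exactly the unsolved problem you started with). Worse, the parity-matching step cannot work as described: a symmetric subset of $[0,1]$ of odd cardinality must contain $\tfrac12$ and one of even cardinality must not, so there is no containment between symmetric sets of consecutive cardinalities other than ``adjoin $\tfrac12$'', which links only even $v$ to $v+1$; the inequality $p(v)\le p(v+2)$ gives monotonicity along each parity class separately and, combined with the midpoint step, still yields nothing of the form $p(v)\le p(v+1)$ for odd $v$. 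As it stands, part (a) --- and with it part (b) of Proposition~\ref{prop:ckv_monotone} as you would derive it --- remains unproven in your write-up; a genuinely different argument (for instance, a global coupling that redistributes the thresholds of all literals on a given variable jointly rather than slot by slot) is needed here.
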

\begin{proof}
  \textit{(a). }
  Suppose we have $V = \{u/(v-1) \mid u=0,\dots,v-1\}$.  For a random constraint, a value in $V\setminus\{1\}$ is drawn uar.  We would like increase $v\leadsto v+1$.
  Firstly, we scale all values by the factor $(v-1)/v$.  This does not influence satisfiability.  Secondly, we increase $u/v$ to $(u+1)/v$ with probability
  $(u+1)/v$, $u=0,\dots,v-2$.  This yields the uniform distribution on $V'\setminus\{1\} = \{ u/v \mid u=0,\dots,v-1\}$.

  Note that increasing a in a literal $(\xx,\rho,a)$ will never make a satisfiable formula unsatisfiable: indeed, the set of satisfying interpretations stays the
  same.  However, if a formula is not satisfiable, whenever the value for~$a$ in a literal is increased, there is a possibility that the formula becomes satisfiable.
  Thus, the probability that a random formula is satisfiable increases with $v\leadsto v+1$.
  
  \textit{(b). }
  To prove~(b), by~(a), it suffices to consider the following sets, for $\lambda = 0,1,2,\dots,\infty$:
  \begin{equation*}
    V_{\lambda} := \Bigl\{  \sum_{i=1}^\lambda  B_i 2^{-i} \mid B \in \{0,1\}^\NN \Bigr\} \cup \{1\},
  \end{equation*}
  i.e., $V_0 = \{0,1\}$, $V_1 = \{0,\nfrac12,1\}$, $V_2 = \{0,\nfrac14,\nfrac12,\nfrac34,1\}$, \dots, $V_\infty = [0,1]$.
  
  To prove~(b), we now use the method of deferred decisions.  We draw~$B=(B_1,B_2,\dots)$ randomly regardless of the value of~$\lambda$.  For a random formula
  $F_k(n,m,2^\lambda+1)$, the $B_1,\dots,B_\lambda$ have been exposed.  Increasing $\lambda\leadsto \infty$ amounts to exposing all remaining $B_{i}$,
  $i=\lambda+1,\lambda+2,\dots$, and adjusting the literals of the formula accordingly.  But this can only increase the sum, and thus, modifying the literals of a
  formula in this way can only turn a not satisfiable formula into a satisfiable one, and thus can only increase $p_k(n,m,\cdot)$.

  \textit{(c). }
  We use the same approach as in~(b).  Suppose that a formula $F := F_k(n,m,\infty)$ is satisfiable.  We then find a finite $\lambda$ such that truncating the sums
  at the $2^{-\lambda}$-term already yields a satisfiable formula.  First of all, we may assume that the literals $(\xx,\rho,a)$ all have distinct values~$a$.  Let
  $\lambda_-$ be the largest number such that there are literals $(\xx,\rho,a)$ and $(\xx',\rho',a')$ for which the sums in $a$ and~$a'$ coincide up to the
  $2^{-\lambda_-}$-term.  Then $\lambda_-$ is finite.  Letting $\lambda := \lambda_- +1$, we see that in truncated random formula, the constraint parts of the
  literals have the same relative ordering as in the original formula.  Hence, the truncated formula is satisfiable.
  
  Thus, every satisfiable formula for $\lambda=\infty$ becomes satisfiable already at a finite value for~$\lambda$.  This proves the stated convergence.
\end{proof}

Since the existence of a threshold is, as of now, conjectural, we let, for $v\in \{2,3,\dots,\infty\}$
\begin{align*}
  c^-_k(v) &:= \sup\{ c \mid F_k(n,cn,v) \text{ a.a.s.\ satisfiable} \}\text{, and}\\
  c^+_k(v) &:= \inf\{ c \mid F_k(n,cn,v) \text{ a.a.n.\ satisfiable} \}.\\
\end{align*}
The existence of a threshold is then equivalent to $c^-_k(v) = c^+_k(v)$; cf.~Fig.~\ref{fig:transition}.

\begin{figure}[htp]
  \centering%
  \scalebox{.6666666}{\input{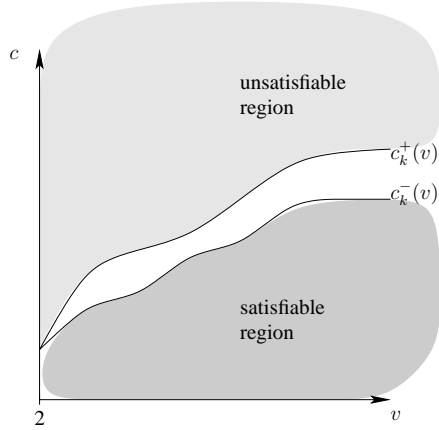}}%
  \caption{Transition from satisfiable to unsatisfiable for increasing~$c$.  Note that both $c^+_k$ and $c^-_k$ converge to a finite value as $v\to\infty$ by
    Proposition~\ref{prop:ckv_monotone} and Theorem~\ref{thm:general-ub}.}\label{fig:transition}%
\end{figure}

From Lemma~\ref{lem:p_monotone}, we immediately derive the following.

\begin{proposition}\label{prop:ckv_monotone}
  For every~$k$ we have the following.
  \begin{enumerate}[(a)]
  \item $c^-_k(v)$ and $c^+_k(v)$ are both nondecreasing with the cardinality~$v$ of the truth-value set.
  \item $c^-_k(v) \le c^-_k(\infty)$ and $c^+_k(v) \le c^+_k(\infty)$.
  \item $\lim_{v\to\infty} c^-_k(v) = c^-_k(\infty)$ and $\lim_{v\to\infty} c^+_k(v) = c^+_k(\infty)$.
  \end{enumerate}
\end{proposition}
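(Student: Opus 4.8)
The plan is to read off~(a) and~(b) directly from the corresponding parts of Lemma~\ref{lem:p_monotone}, and to treat the convergence~(c) as the only delicate point. For~(a), fix $k,n,m$ and note that the set of ratios $c$ for which $F_k(n,cn,v)$ is a.a.s.\ satisfiable only grows with $v$: if $c$ lies in this set, then $p_k(n,cn,v)\to 1$, and since $p_k(n,cn,v)\le p_k(n,cn,v+1)\le 1$ by Lemma~\ref{lem:p_monotone}(a), also $p_k(n,cn,v+1)\to 1$; taking the supremum of such $c$ gives $c^-_k(v)\le c^-_k(v+1)$. Dually, the set of ratios for which the formula is a.a.n.\ satisfiable only shrinks with $v$: if $p_k(n,cn,v+1)\to 0$ then $p_k(n,cn,v)\le p_k(n,cn,v+1)\to 0$, so the infimum can only increase, giving $c^+_k(v)\le c^+_k(v+1)$. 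Part~(b) is identical, with Lemma~\ref{lem:p_monotone}(b) and the symbol $\infty$ in place of $v+1$.

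For~(c), combining~(a) and~(b) already yields the easy half: $c^-_k(v)$ is nondecreasing and bounded above by $c^-_k(\infty)$, hence $c^-_k(v)\nearrow L^-\le c^-_k(\infty)$, and likewise $c^+_k(v)\nearrow L^+\le c^+_k(\infty)$. It remains to prove the reverse inequalities $L^-\ge c^-_k(\infty)$ and $L^+\ge c^+_k(\infty)$. The natural tool is the grand coupling behind Lemma~\ref{lem:p_monotone}: draw the bits $B=(B_1,B_2,\dots)$ of every right-hand side once, so that the level-$\lambda$ formula (with $v=2^{\lambda}+1$) and the formula for $V=[0,1]$ live on one probability space, satisfiability is monotone in $\lambda$, and $p_k(n,cn,2^\lambda+1)\uparrow p_k(n,cn,\infty)$ as $\lambda\to\infty$ for each fixed $n$, by Lemma~\ref{lem:p_monotone}(c).

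The main obstacle is that this convergence is only pointwise in $n$, whereas the thresholds involve the limit $n\to\infty$, and the two limits cannot be exchanged for free. Indeed, the proof of Lemma~\ref{lem:p_monotone}(c) finds a truncation level $\lambda$ that preserves the relative order of the $km=\Theta(n)$ right-hand sides, and this $\lambda$ must grow (roughly like $\log n$) to separate the $\Theta(n^2)$ pairs; so for a \emph{fixed} $v$ the gap $p_k(n,cn,\infty)-p_k(n,cn,v)$ need not tend to $0$ as $n\to\infty$. In fact a genuine gap is present already at $k=2$, where $v=2$ is classical $2$-SAT with threshold $1<2=c^-_2(\infty)$. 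Thus a purely soft limit interchange cannot suffice, and one needs quantitative, uniform-in-$n$ control of this gap as $v\to\infty$.

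To supply it, I would argue via robustness of satisfying interpretations. Truncation to level $\lambda$ decreases every right-hand side by at most $2^{-\lambda}$, i.e.\ it perturbs all constraints by a uniformly bounded amount; consequently any interpretation that satisfies the continuum formula with slack exceeding $2^{-\lambda}$ on \emph{every} constraint also satisfies the level-$\lambda$ formula. Hence, if for $c<c^-_k(\infty)$ one can show that a random $V=[0,1]$ formula a.a.s.\ admits such a margin-$2^{-\lambda}$ satisfying interpretation once $\lambda$ is large enough (depending on $c$ but not on $n$), then $c\le c^-_k(2^\lambda+1)\le L^-$, and letting $c\uparrow c^-_k(\infty)$ gives $L^-\ge c^-_k(\infty)$; the $c^+$ statement follows symmetrically. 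Establishing this margin statement — equivalently, that the threshold for ``robustly satisfiable with margin $\eps$'' tends to $c^-_k(\infty)$ as $\eps\to 0$ — is the crux, since for fixed $\lambda$ the requirement that all $\Theta(n)$ slacks exceed $2^{-\lambda}$ fails a.a.s., so one must optimize the interpretation and exploit continuity of the continuum threshold in the margin. I expect this to be the genuinely hard step, plausibly drawing on the same structural description of (un)satisfiable instances used elsewhere in the paper (e.g.\ the digraph certificate of Chepoi et al.~\cite{ChepoiCreignouHermannSalzer10} when $k=2$).
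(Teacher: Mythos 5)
Parts (a) and (b) of your argument are correct and are exactly what the paper does (the paper merely says they ``follow immediately'' from Lemma~\ref{lem:p_monotone}; you spell out the upward/downward closure of the two sets of ratios, which is fine modulo the standard monotonicity in the number of clauses). The genuine gap is in part (c): from (a) and (b) you correctly get $\lim_{v\to\infty}c^{\pm}_k(v)=:L^{\pm}\le c^{\pm}_k(\infty)$, but the reverse inequality is precisely what you do not prove. Your ``margin'' programme --- that for every $c<c^-_k(\infty)$ a random $[0,1]$-formula a.a.s.\ admits a satisfying interpretation with slack exceeding $2^{-\lambda}$ in every constraint, for some $\lambda$ depending on $c$ but not on $n$ --- is left entirely as a plan; you yourself call it ``the genuinely hard step'' and give no argument for it (nor do you address that the interpretation itself must be rounded into $V_\lambda$, which the slack condition alone does not handle). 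As submitted, the proposal establishes (a), (b), and only the easy half of (c).

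That said, your diagnosis of where the difficulty sits is sharper than the paper's own treatment. The paper proves (c) by contradiction: assuming $L^+<c^+_k(\infty)$, it picks $c$ strictly between them, finds a subsequence $(n_\ell)$ with $p(n_\ell,cn_\ell,\infty)\ge\eps>0$ while $\lim_n p(n,cn,v)=0$ for every fixed $v$, and declares this to contradict Lemma~\ref{lem:p_monotone}(c). But that lemma is pointwise in $n$: a double array such as $a_{n,v}=\eps\,\Ind[v\ge n]$ satisfies $a_{n,v}\to 0$ as $n\to\infty$ for each fixed $v$ and $a_{n,v}\to\eps$ as $v\to\infty$ for each fixed $n$, so no formal contradiction arises without some uniformity in $n$ of the convergence in $v$ --- exactly the limit interchange you warn against, and exactly the quantitative control your margin argument was meant to supply. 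So the verdict is: your write-up is incomplete at the crux of (c), but the missing uniform-in-$n$ estimate is a real issue that the paper's short contradiction argument does not supply either; completing your robustness step would in fact repair both.
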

\begin{proof}
  (a) and (b) follow immediately from their counterparts in Lemma~\ref{lem:p_monotone}.  As for~(c), let $c^+ := \lim_{v\to\infty} c^+_k(v)$ and assume that $c^+ <
  c^+_k(\infty)$.  Then, for $c$ with $c^+ < c < c^+_k(\infty)$, we have that $p(n,cn,\infty)$ does not converge to~$0$, so there is a sequence
  $(n_\ell)_{\ell=1,2,\dots}$ for which $p(n_\ell,cn_\ell,\infty)$ is bounded away from~0.  But $\lim_n p(n,cn,v) = 0$, contradicting part~(c) of
  Lemma~\ref{lem:p_monotone}.

  Similarly, let $c^- := \lim_{v\to\infty} c^-_k(v)$ and assume that $c^- < c^-_k(\infty)$.  Then, for $c$ with $c^- < c < c^-_k(\infty)$, we have $\lim_n
  p(n,cn,\infty) = 1$.  But for all $v$, we have that $p(n,cn,v)$ is bounded away from~1.  We obtain a contradiction in the same way as above.
\end{proof}

\subsection*{Ancilliary}
We conclude the section with the following trivial lemma.

\begin{lemma}\label{lem:prb-sat-dj}
  If~$V=[0,1]$, then the following holds.
  \begin{enumerate}[(a)]
  \item For every fixed $x\in[0,1]$, the probability that a random constraint defined by $(\varrho,A)$ is satisfied, is $\nfrac12$.
  \item For every to constraints $(\varrho,A)$, $(\varrho',A')$, the probability that there is no $x\in V$ satisfying both (i.e., the signs are disjoint) is
    $\nfrac14$.
  \end{enumerate}
  \qed
\end{lemma}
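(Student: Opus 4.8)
The plan is to unwind the definition of a random constraint from Section~\ref{sec:basics} and reduce each part to an elementary computation. Recall that a constraint is a pair $(\varrho,A)$ in which $\varrho$ is chosen uniformly from $\{\text{``}\le\text{''},\text{``}\ge\text{''}\}$, independently of the right-hand side; and for $V=[0,1]$ the value $A$ is uniform on $[0,1]$, the excluded point $a=1$ carrying measure zero and thus being ignorable. The satisfying set of the constraint is the interval $[0,A]$ when $\varrho=\text{``}\le\text{''}$ (the condition $x\le A$) and the interval $[1-A,1]$ when $\varrho=\text{``}\ge\text{''}$ (the condition $x\ge 1-A$). Both parts follow by conditioning on the comparison relation(s) and then integrating over the uniform right-hand side(s).

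For part~(a) I would fix $x$ and condition on $\varrho$. If $\varrho=\text{``}\le\text{''}$, the constraint $x\le A$ holds with probability $\Prb[A\ge x]=1-x$; if $\varrho=\text{``}\ge\text{''}$, the constraint $x\ge 1-A$ holds with probability $\Prb[A\ge 1-x]=x$. Averaging over the two equally likely choices of $\varrho$ yields $\tfrac12(1-x)+\tfrac12 x=\tfrac12$, independently of $x$.

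For part~(b) I would split into the four equally likely sign patterns. If both constraints point the same way, their satisfying intervals share the common endpoint $0$ (two ``$\le$''s) or $1$ (two ``$\ge$''s), hence are never disjoint. In the two mixed patterns, say $\varrho=\text{``}\le\text{''}$ and $\varrho'=\text{``}\ge\text{''}$, the intervals $[0,A]$ and $[1-A',1]$ are disjoint exactly when $A<1-A'$, i.e.\ $A+A'<1$; since $A,A'$ are independent and uniform on $[0,1]$, this event has probability $\tfrac12$, the area of the lower triangle of the unit square. Combining, the two same-direction patterns (total probability $\tfrac12$) contribute $0$ and each mixed pattern contributes its conditional probability $\tfrac12$, giving an overall disjointness probability of $\tfrac12\cdot\tfrac12=\tfrac14$.

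There is no genuine obstacle, as the statement is indeed trivial; the only points requiring care are bookkeeping ones. The first is the $1-a$ convention hidden in the ``$\ge$'' literals: it is precisely this reflection that makes part~(a) symmetric in $x$ and forces the clean answer in part~(b), so I would keep track of it explicitly when writing down the satisfying intervals. The second is the remark, made once at the outset, that the excluded value $a=1$ is a null event for $V=[0,1]$ and therefore perturbs none of the probabilities above.
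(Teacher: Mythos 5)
Your computation is correct and is exactly the intended argument: the paper omits the proof entirely (the lemma is labelled trivial and closed with \qed), and your case analysis over the comparison relations, together with the observation that the $1-a$ convention makes part~(a) independent of $x$ and that same-direction signs always share an endpoint in part~(b), is the standard way to verify it. Nothing is missing.
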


\section{Proof of Theorem~\ref{thm:general-ub}}\label{sec:pf-general-ub}

In this section, we prove Theorem~\ref{thm:general-ub}.  By Lemma~\ref{lem:p_monotone}, it suffices to prove the statement for the case when the right hand sides of
the formulas are drawn uar from $[0,1]$.  In particular, we can assume that no two literals have the same right hand side.

An interpretation $\xx \to x$ of a formula~$F$ is called \textit{tight,}, if for every variable $\xx_j$, there one of the two literals ``$\xx_j \le x_j$'' or
``$\xx_j \ge x_j$'' occurs in~$F$.  In other words, only the right hand sides of the inequalities are allowed as values for the variables.  The following fact is
trivial.

\begin{lemma}
  There exists a satisfying interpretation if, and only if, there exists a satisfying tight interpretation.
  \qed
\end{lemma}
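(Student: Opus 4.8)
The plan is to prove the nontrivial direction — that an arbitrary satisfying interpretation yields a satisfying \emph{tight} one — by a coordinate-wise rounding argument that slides each variable's value onto a right-hand side occurring in the formula without ever destroying satisfaction. The reverse implication needs no work: a tight satisfying interpretation is, in particular, a satisfying interpretation.

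For the forward direction I would start from any satisfying interpretation $x = (x_j)$ and round each variable independently. Fixing a variable $\xx_j$, let $a_1 < \dots < a_t$ be the right-hand sides of the literals on $\xx_j$ that occur in $F$ (a variable not occurring in $F$ is unconstrained, so I simply ignore it). I set $x_j' := \max\{a_i \mid a_i \le x_j\}$ when $x_j \ge a_1$, and $x_j' := a_1$ otherwise. Since $x_j'$ equals some $a_i$, one of the literals ``$\xx_j \le x_j'$'' or ``$\xx_j \ge x_j'$'' occurs in $F$; performing this for every variable therefore produces a tight interpretation.

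The key step is to verify that rounding $\xx_j$ never turns a satisfied literal on $\xx_j$ into an unsatisfied one. When $x_j \ge a_1$ the value only decreases, $x_j' \le x_j$: an upper literal ``$\xx_j \le a$'' satisfied by $x$ has $a \ge x_j \ge x_j'$ and stays satisfied, while a lower literal ``$\xx_j \ge b$'' satisfied by $x$ has $b \le x_j$, and because $b$ is itself one of the $a_i$, the choice of $x_j'$ as the largest right-hand side not exceeding $x_j$ forces $b \le x_j'$, so it too stays satisfied. When $x_j < a_1$ the value increases to $a_1$: every upper literal has right-hand side $\ge a_1 = x_j'$ and remains satisfied, whereas no lower literal was satisfied to begin with (each has $b \ge a_1 > x_j$), so there is nothing to preserve.

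Finally, since altering $x_j$ affects only the literals on $\xx_j$, and every previously satisfied such literal stays satisfied, each clause retains a satisfied literal and the rounded interpretation still satisfies $F$. The only thing that could go wrong is an interaction between the coordinates; it disappears precisely because the rounding of each variable depends on, and touches, only its own literals, so the single-variable argument composes over all $n$ variables without conflict. I expect this decoupling to be the one point worth stating explicitly, the rest being the elementary monotonicity of non-strict inequalities.
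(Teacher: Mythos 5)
Your proof is correct, and it is exactly the elementary rounding argument the authors had in mind: the paper states this lemma without proof, deeming it trivial, so there is no alternative argument to compare against. The one point worth noting is that you correctly read the definition of \emph{tight} as applying only to variables that actually occur in $F$; taken literally, the definition would be vacuous for a formula with an unused variable, but your convention is clearly the intended one and is consistent with how the lemma is used in the proof of Theorem~\ref{thm:general-ub}.
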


For a random formula~$F$, denote by $Y = Y_F$ the number of satisfying tight interpretations.  To prove Theorem~\ref{thm:general-ub}, we compute the expectation
of~$Y$.

When sampling a random formula, we condition on $R=(R_j)_{j=1,\dots,n}$ as discussed above.  For $j=1,\dots,n$ and $\ell=1,\dots,R_j$, let $A_{j,\ell}$ be the
(random) right hand side in the slot containing the $\ell$th copy of the variable $\xx_j$.  For every $\ell \in \prod_{j=1}^n \{1,\dots,R_j\}$, we construct an
interpretation $\xx\to x(\ell)$ by letting $x_j(\ell) := A_{j,\ell(j)}$.  With the sum below extending over all $\ell \in \prod_{j=1}^n \{1,\dots,R_j\}$, we have
\begin{equation*}
  Y
  = 
  \sum_{\ell} \Ind[ x(\ell) \text{ satisfies } F ].
\end{equation*}

For every fixed $\ell$, we can estimate the probability of the event that $x(\ell)$ satisfies~$F$.  Indeed, $n$~literals will be ``automatically'' satisfied, namely
for every variable~$\xx_j$ the one containing the $\ell(j)$th copy of~$\xx_j$.  Since the right hand sides are drawn independently, for each of the remaining
literals, the probability of being satisfied by~$x$ is $\nfrac12$, by Lemma~\ref{lem:prb-sat-dj}.  The automatically satisfied literals cover at most~$n$ clauses,
which leaves $(c-1)n$ clauses, each of which contains exactly~$k$ of the remaining literals.  Conditioned on the assignment~$X$ of the variable copies to the slots,
the probability that all of these clauses are satisfied is thus at most
\begin{equation*}
  \bigl( 1-2^{-k} \bigr)^{(c-1)n}.
\end{equation*}
Moreover, the event that all the remaining clauses are satisfied depends only on the constraint part of the literals and is thus independent from~$R$
  
Now we can compute the expected number of satisfying tight interpretations.
\begin{multline*}
  \Exp Y
  =
  \Exp \biggl( \Exp\Bigl( \sum_{\ell} \Ind[ x(\ell) \text{ satisfies } F ] \Bigm| R \Bigr) \biggl)
  \\
  =
  \Exp \biggl( \Exp\Bigl( \sum_{\ell} \Exp( \Ind[ x(\ell) \text{ satisfies } F ] \mid X ) \Bigm| R \Bigr) \biggl)
  \\
  \le
  \bigl( 1-2^{-k} \bigr)^{(c-1)n} \Exp \biggl( \Exp\Bigl( \sum_{\ell} 1 \Bigm| R \Bigr) \biggl)
  \\
  =
  \bigl( 1-2^{-k} \bigr)^{(c-1)n} \Exp \Bigl( \prod_{j=1}^n R_j \Bigl)
  \le
  \bigl( 1-2^{-k} \bigr)^{(c-1)n} (\nfrac{km}{n})^n
  =
  \Bigl(  kc\bigl( 1-2^{-k} \bigr)^{c-1}  \Bigr)^n.
\end{multline*}
From this, Theorem~\ref{thm:general-ub} follows by Markov's inequality.

\section{Proof of Theorem~\ref{thm:2-lb}({\textrm{\rm a})}}\label{sec:pf-2-lb}

In this section, we prove Theorem~\ref{thm:2-lb}.  For this, we use the Aspvall-Plass-Tarjan-style~\cite{AspvallPlassTarjan79} characterization of non-satisfiable
signed 2-SAT formulas by Chepoi et al.~\cite{ChepoiCreignouHermannSalzer10} together with Chv\'atal and Reed's~\cite{ChvatalReed92} trick of counting ``bicycles''.

An \textit{$\ell$-bicycle} contained in a 2-rSAT formula~$F$ is a sequence of $2\ell$ literals
\begin{equation*}
  w^f_0,w^t_1,w^f_1\dots,w^t_\ell,w^f_\ell,w^t_{\ell+1}
\end{equation*}
together with two (not necessarily distinct) numbers $i_0\in\{2,\dots,\ell\}$ and $i_1\in\{1,\dots,\ell-1\}$ such that
\begin{enumerate}[(bc1)]
\item\label{enum:bicycle:vars-dtct} the variables in $w^t_1,w^t_2,\dots,w^t_\ell$ are all distinct;
\item\label{enum:bicycle:vars-traverse} the variables in the two literals $w^t_i,w^f_i$ are the same, for $i=1,\dots,\ell$;
\item\label{enum:bicycle:vars-ends} the variable of $w^f_0$ is the same as the one of $w^t_{i_0}$, and the variable of $w^f_{\ell+1}$ is the same as the one of
  $w^t_{j_1}$;
\item\label{enum:bicycle:clauses} for each $i=0,\dots,\ell$, ``$w^f_{i} \lor w^t_{i+1}$'' is a clause in~$F$;
\item\label{enum:bicycle:traverse} for each $i=1,\dots,\ell$, the constraint parts of $w^t_i,w^f_i$ are disjoint.
\end{enumerate}

The following is an immediate adaption of Chv\'atal and Reed's proof in~\cite{ChvatalReed92} to Chepoi et al.'s~\cite{ChepoiCreignouHermannSalzer10} variant, for the
signed case, of Aspvall et al.'s~\cite{AspvallPlassTarjan79} characterization of non-satisfiable 2-SAT formulas.

\begin{lemma}
  Every unsatisfiable 2-rSAT formula contains an $\ell$-bicycle, for some $\ell \ge 2$.
  \qed
\end{lemma}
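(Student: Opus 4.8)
The plan is to transport Chv\'atal and Reed's extraction of a bicycle from the implication digraph of a classical 2-SAT formula into the regular signed setting, using Chepoi et al.'s digraph certificate in place of the Aspvall--Plass--Tarjan implication digraph. First I would set up the relevant digraph $D$ whose vertices are the literals occurring in $F$, with arcs of two kinds. A clause arc records, for each clause ``$w \lor w'$'', the implications ``$\neg w \Rightarrow w'$'' and ``$\neg w' \Rightarrow w$''; such arcs are what condition (bc4) tracks. A traversal arc runs between two literals on the same variable whose constraint parts are disjoint, in the direction ``$w \Rightarrow \neg w'$'' (if $\xx$ satisfies one of two disjoint signs it cannot satisfy the other); such arcs encode conditions (bc2) and (bc5). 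Chepoi et al.'s characterization says, in the form we need, that $F$ is unsatisfiable if and only if $D$ contains a directed path from some literal to a literal forcing the opposite constraint on the same variable --- the signed analogue of a path from $x$ to $\bar x$ lying in one strong component.

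Second, I would run the Chv\'atal--Reed minimality argument on $D$. Among all directed paths in $D$ that certify unsatisfiability in the above sense, pick one of minimum length and write it as the alternating sequence predicted by the bicycle definition, in which the $w^t_i$ are the heads of clause arcs and the $w^f_i$ their tails, so that consecutive steps alternate between a clause arc (from $w^f_i$ to $w^t_{i+1}$, giving (bc4)) and a traversal arc (from $w^t_i$ to $w^f_i$, using disjointness, giving (bc5)). Minimality then forces (bc1): if two of the interior literals $w^t_1,\dots,w^t_\ell$ shared a variable, they would either be equal --- allowing the path to be shortened by splicing --- or impose contradictory constraints --- yielding a strictly shorter certifying path --- and either case contradicts minimality.

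Third, I would read off the two endpoints. Because the chosen path certifies a contradiction, it cannot simply peter out: the initial literal $w^f_0$ and the terminal literal $w^f_{\ell+1}$ must each reuse a variable already present among the interior literals, which is precisely condition (bc3) and which supplies the indices $i_0,i_1$; the admissible ranges $i_0\in\{2,\dots,\ell\}$ and $i_1\in\{1,\dots,\ell-1\}$ come from the interior positions at which a genuine fold-back can occur, and they are nonempty exactly when $\ell\ge 2$. A single clause being always satisfiable, the shortest unsatisfiable configuration in the regular signed order must fold back at two distinct interior positions, which is what yields the stated bound $\ell\ge 2$.

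The step I expect to be the main obstacle is matching Chepoi et al.'s digraph certificate exactly to the alternating clause/traversal structure demanded by (bc1)--(bc5), and in particular arranging that the minimal certifying path can be chosen so that disjointness (bc5) holds at \emph{every} interior step while the interior variables stay distinct (bc1) and the two ends fold back onto interior variables (bc3). In classical 2-SAT the complement of a literal is unique and the implication digraph is skew-symmetric, which makes the minimality bookkeeping routine; here one must instead argue with the linear order on the right-hand sides and the disjointness relation between signs, so the bulk of the work is verifying that each shortening move in the Chv\'atal--Reed argument remains available and preserves the signed implication structure.
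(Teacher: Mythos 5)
Your proposal follows exactly the route the paper itself takes: the paper offers no written proof of this lemma, only the remark that it is an ``immediate adaptation'' of Chv\'atal and Reed's bicycle-extraction argument to the implication digraph of Chepoi et al., and your sketch (clause arcs plus same-variable disjointness arcs, a closed walk through two mutually reachable literals with disjoint signs on one variable, and a minimality argument yielding (bc1)--(bc5)) is precisely that adaptation, carried out in more detail than the paper provides. The only point to tighten is your third step: the fold-back of the endpoints onto interior variables is not a consequence of the path ``certifying a contradiction'' per se, but comes from minimizing over subwalks of the closed walk that already have the fold-back property (the full closed walk being one such candidate because it passes through both disjoint literals on the distinguished variable).
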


As in the previous section, we will make use of the fact that, for $\abs{V}=\infty$, we may assume that no two
literals have the same right hand side.

As above, let $R=(R_j)_{j=1,\dots,n}$ count the occurences of the variables in the random formula~$F=F'_2(n,m,\infty)$.  Conditioned on~$R$, we can recover the
distribution of~$F$ as follows.
Let there be~$n$ \textit{buckets} $B_1,\dots,B_n$; bucket~$B_j$ contains~$R_j$ \textit{points.}  Note that there is an even number $2m$ of points.  A perfect
matching of the points corresponds to selecting two variables for each clause of the formula.  Hence, drawing a matching at random and, independently, drawing a
random constraint part for each point, gives us a random formula.  The distribution is the same as that for random formulas, conditioned on~$R$.

We now count the number of $\ell$-bicycles in~$F$ by counting the corresponding matchings.  We start with the following easy lemma.

\begin{lemma}\label{lem:poisson-trick}
  Let $d_1,\dots,d_n$ be nonnegative integers.  Then
  \begin{enumerate}
  \item $\displaystyle \Exp \prod_{j=1}^n (R_j)_{d_j} \le (2c)^{\sum_j d_j} $
  \item If all $\sum_j d_j = O(1)$, then $\Exp \prod_{j=1}^n (R_j)_{d_j} = (1+o(1)) (2c)^{\sum_j d_j}$.
  \end{enumerate}
\end{lemma}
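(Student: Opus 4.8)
The plan is to compute the mixed factorial moment $\Exp \prod_{j=1}^n (R_j)_{d_j}$ \emph{exactly} and then read off both bounds. Recall that here $k=2$ and $m=cn$, so $R=(R_1,\dots,R_n)$ is multinomially distributed with $N := km = 2cn$ trials thrown into $n$ cells, each cell having probability $\nfrac1n$; in particular the per-variable mean is $N/n = 2c$, which is exactly where the constant $2c$ in the statement comes from. Writing $D := \sum_{j=1}^n d_j$, the key identity I would use is the standard factorial-moment formula for the multinomial,
\begin{equation*}
  \Exp \prod_{j=1}^n (R_j)_{d_j} = (N)_D \, n^{-D},
\end{equation*}
where $(N)_D = N(N-1)\cdots(N-D+1)$ is the falling factorial.

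To establish this identity I would argue combinatorially. Think of the $N$ trials as labelled balls thrown independently and uniformly into the $n$ cells. For fixed $j$, the falling factorial $(R_j)_{d_j}$ counts the ordered $d_j$-tuples of \emph{distinct} balls all landing in cell $j$; hence $\prod_j (R_j)_{d_j}$ counts the ways to pick an ordered tuple of $D$ distinct balls and declare the first $d_1$ of them to be in cell~$1$, the next $d_2$ in cell~$2$, and so on. There are $(N)_D$ such ordered choices of distinct balls, and each prescribed assignment occurs with probability $n^{-D}$, since each of the $D$ designated balls must independently land in its designated cell. Taking expectations and using linearity gives the displayed identity. (Alternatively, one obtains the same value from the independent-Poisson representation recalled above, by extracting the coefficient of $z^{N}$ in $\Exp\bigl[\prod_j (P_j)_{d_j}\, z^{\sum_j P_j}\bigr] = (2c)^D z^D e^{-N(1-z)}$ and dividing by $\Prb[\sum_j P_j = N]$, which again yields $(2c)^D (N)_D/N^D$.)

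With the identity in hand both parts are immediate. For~(1), I use $(N)_D \le N^D$, so
\begin{equation*}
  \Exp \prod_{j=1}^n (R_j)_{d_j} = (N)_D \, n^{-D} \le (N/n)^D = (2c)^{\sum_j d_j}.
\end{equation*}
For~(2), I write $(N)_D/N^D = \prod_{i=0}^{D-1}(1 - i/N)$; using $N = 2cn \to \infty$ together with $D = O(1)$, this product is $1 - O(D^2/N) = 1 - o(1)$, whence $\Exp \prod_j (R_j)_{d_j} = (2c)^D \cdot (N)_D/N^D = (1+o(1))(2c)^{\sum_j d_j}$.

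There is no serious obstacle here; this is the ``easy lemma'' advertised in the statement. The only points demanding a little care are bookkeeping ones: correctly identifying the number of trials as $N = km = 2cn$, so that the per-cell mean, and hence the base of the exponent, is exactly $2c$; and observing that it is precisely the hypothesis $\sum_j d_j = O(1)$ that forces the correction factor $(N)_D/N^D$ down to $1 - o(1)$. Without that hypothesis only the upper bound~(1) survives, which is all that is needed for the generic bicycle count anyway.
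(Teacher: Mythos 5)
Your proof is correct and is precisely the ``direct computation using the multinomial distribution of $R$'' that the paper's one-line proof alludes to: the exact mixed factorial moment $(N)_D\,n^{-D}$ with $N=km=2cn$ immediately yields both the upper bound and the $(1+o(1))$ asymptotics. No differences worth noting.
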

\begin{proof}
  This is a direct computation using~\eqref{eq:r-distrib}.
\end{proof}

We are now ready to complete the proof.

\begin{proof}[Proof of Theorem~\ref{thm:2-lb}]
  Let $b=(b_1,\dots,b_\ell) \in \{1,\dots,n\}^\ell$ be the choice of the $\ell$ distinct variables in condition~(\ref{enum:bicycle:vars-dtct}) of the definition of a
  bicycle above.  For each one of these~$b$, we have to choose $i_0 \in \{2,\dots,\ell\}$ and $i_1 \in \{1,\dots,\ell-1\}$.


  Letting $d_{b_i} := 2$, $i=1,\dots,\ell$, and $d_j := 0$ for each~$j$ not occuring in $(b_1,\dots,b_\ell)$, the number of choices for the matching edges between
  the buckets corresponding to the clauses ``$w^f_{i} \lor w^t_{i+1}$'', $i=1,\dots,\ell-1$, conditioned on~$R$, is $\prod_{j=1}^n (R_j)_{d_j}$.  For the clauses
  ``$w^f_{0} \lor w^t_{1}$'' and ``$w^f_{\ell} \lor w^t_{\ell+1}$'', the number of choices depend on whether $i_0=\ell$, or $i_1=0$, or both, or neither.  For each
  choice of $i_0$ and $i_1$, if we change the $d_j$ to count the number of times the variable~$\xx_j$ occurs in the bicycle, the number of choices is $\prod_{j=1}^n
  (R_j)_{d_j}$.  There are at most $\ell^2$ choices for the $i_0$ and $i_1$, and we have $\sum_j d_j = 2(\ell+1)$, so that, by Lemma
  by Lemma~\ref{lem:poisson-trick}, the expectation of the number of choices for the matching edges in the bicycle for fixed $b$ and $i_0, i_1$ is at most
  $(2c)^{2(\ell+1)}$, whereas the total number of choices for these matching edges equals $(2m-1)(2m-3)\dots(2m-2\ell-1)$.
  
  There are $(n)_\ell$ possible choices of~$b$, and
  the probability of disjointness in~(\ref{enum:bicycle:traverse}) is $\nfrac14$ for each variable (by Lemma~\ref{lem:prb-sat-dj}), or $4^{-\ell}$ for the whole
  bicycle.
  
  Thus, denoting by $Y_\ell$ the number of $\ell$-bicycles in a random formula and by $X_{(b,i_0,i_1)}$ the indicator variable of the event that a bicycle with these
  parameters exists, we may compute as follows:
  \begin{multline*}
    \Exp Y_\ell
    = \Exp\lt( \Exp\Bigl( \sum_{b} \sum_{i_0,i_1} X_{(b,i_0,i_1)} \Bigm| R \Bigr) \rt)
    = \sum_{b} \Exp\Bigl( \sum_{i_0,i_1} \Exp( X_{(b,i_0,i_1)} \mid R ) \Bigr)
    \\
    \le \sum_{b}  4^{-\ell} \ell^2 (2c)^{2(\ell+1)} \frac{ 1 }{{(2m-1)(2m-3)\dots(2m-2\ell-1)}} \biggr)
    \\
    = 4^{-\ell} \ell^2 (2c)^{2(\ell+1)} \frac{ (n)_\ell }{{(2m-1)(2m-3)\dots(2m-2\ell-1)}}.
  \end{multline*}
  For the fraction, we use ad-hoc estimates.  Noting that $\ell \le n$, we have $2m-2\ell \ge 2(c-1)n$.  By the monotonicity property, Lemma~\ref{lem:p_monotone}, it
  suffices to prove the theorem for $c>1$, in which case $2m-2\ell-1 =: \omega(n) \to \infty$.
  From Stirling's formula, we see that 
  \begin{multline*}
    \frac{ (n)_\ell }{{(2m-1)(2m-3)\cdots(2m-2\ell-1)}}
    =
    \frac{1}{\omega\, (2c)^\ell} \, \prod_{j=0}^{\ell-1} \frac{n-j}{n- \nfrac{(j+\nfrac12)}{c}}
    \\
    \le 
    \frac{1}{\omega\, (2c)^\ell} \, \prod_{j=0}^{\frac{1}{(c-1)2}} \frac{n-j}{n- \nfrac{(j+\nfrac12)}{c}}
    = (1+o(1)) \frac{1}{\omega\, (2c)^\ell}
  \end{multline*}
  
  Summing over~$\ell$, we see that
  \begin{equation*}
    \sum_{\ell=2}^n Y_\ell
    \le \frac{2c}{\omega} \sum_{\ell} \ell^2(\nfrac{c}{2})^\ell = O_c\Bigl( \frac{1}{\omega} \Bigr),
  \end{equation*}
  for $c < 2$ (the constant in the big-O depends on~$c$).  Thus, the expected number of bicycles of arbitrary length is $O_c(\nfrac1n)$.  From this, the statement of
  the theorem follows.
\end{proof}

\section{Proof of Theorem~\ref{thm:2-ub}({\textrm{\rm b})}}\label{sec:pf-2-ub}

As in the previous section, to prove Theorem~\ref{thm:2-ub}(b), we adapt the approach of Chv\'atal and Reed~\cite{ChvatalReed92}: Prove the non-satisfiability by
establishing the existence of an obstruction by the second moment method.

For an even integer $\ell \ge 6$, we an \textit{$\ell$-snake} consists of a selection of $\ell$ distinct variables $\xx_{b_1},\dots,\xx_{b_\ell}$, and clauses
``$(\xx_{b_i},\rho_i,a_i) \lor (\xx_{b_{i+1}},\rho'_{i+1},a'_{i+1})$'', $i=0,\dots,\ell$, with $b_0 := b_{\ell/2} =: b_{\ell+1}$ such that
\begin{enumerate}[(sk1)]
\item the constraint parts $(\rho'_{i},a'_{i})$ and $(\rho_{i},a_{i})$ are disjoint, for $i=1,\dots,\ell$.
\item the constraint parts $(\rho'_{\ell+1},a'_{\ell+1})$ and $(\rho_{0},a_{0})$ are disjoint.
\item the constraint parts $(\rho'_{\ell/2},a'_{\ell/2})$ and $(\rho'_{\ell+1},a'_{\ell+1})$ are disjoint; as well as the constraint parts
  $(\rho_{\ell/2},a_{\ell/2})$ and $(\rho_{0},a_{0})$ are disjoint.
\end{enumerate}

\begin{lemma}
  If there exists an $\ell$-snake, then the formula is not satisiable.
\end{lemma}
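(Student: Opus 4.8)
The plan is to run the classical Aspvall--Plass--Tarjan / Chv\'atal--Reed implication argument, with the disjointness conditions (sk1)--(sk3) playing the role of literal negations. Suppose for contradiction that the formula has a satisfying interpretation $x$. For each $i=1,\dots,\ell$ the variable $\xx_{b_i}$ carries two literals of the snake: its \emph{head} literal $(\rho'_i,a'_i)$, which is the right-hand literal of clause~$i-1$, and its \emph{tail} literal $(\rho_i,a_i)$, the left-hand literal of clause~$i$. Write $H_i$ for the event ``$x_{b_i}$ satisfies the head literal on $\xx_{b_i}$''. Since (sk1) makes head and tail on $\xx_{b_i}$ disjoint, no value satisfies both, so $H_i$ forces the tail literal on $\xx_{b_i}$ to be violated.

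First I would establish the implication chain $H_1\Rightarrow H_2\Rightarrow\dots\Rightarrow H_\ell$. For $1\le i\le \ell-1$, clause~$i$ is the disjunction of the tail literal on $\xx_{b_i}$ and the head literal on $\xx_{b_{i+1}}$; assuming $H_i$, the first disjunct is violated by (sk1), so satisfaction of clause~$i$ forces the second, namely $H_{i+1}$. The remaining task is to close the loop at the distinguished variable $\xx_{b_{\ell/2}}=\xx_{b_0}=\xx_{b_{\ell+1}}$, which carries four literals: its head literal $P:=(\rho'_{\ell/2},a'_{\ell/2})$ and tail literal $Q:=(\rho_{\ell/2},a_{\ell/2})$, together with $S:=(\rho_0,a_0)$ (the left literal of clause~$0$) and $T:=(\rho'_{\ell+1},a'_{\ell+1})$ (the right literal of clause~$\ell$); note that $H_{\ell/2}$ is precisely ``$x_{b_{\ell/2}}$ satisfies $P$''.

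Then I would split on whether $H_{\ell/2}$ holds. If it does, the chain yields $H_\ell$; clause~$\ell$ equals (tail on $\xx_{b_\ell}$) $\lor\,T$, and $H_\ell$ violates its first disjunct via (sk1), forcing $x_{b_{\ell/2}}$ to satisfy $T$ --- contradicting that $x_{b_{\ell/2}}$ also satisfies $P$, since $P$ and $T$ are disjoint by (sk3). If $H_{\ell/2}$ fails, the contrapositive of $H_1\Rightarrow\dots\Rightarrow H_{\ell/2}$ gives $\neg H_1$; as clause~$0$ is $S\lor$ (head on $\xx_{b_1}$), the value $x_{b_{\ell/2}}$ must satisfy $S$. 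Disjointness of $Q$ and $S$ from (sk3) then forces $x_{b_{\ell/2}}$ to violate $Q$, so clause~$\ell/2$, being $Q\lor$ (head on $\xx_{b_{\ell/2+1}}$), forces $H_{\ell/2+1}$; propagating along the chain gives $H_\ell$, and clause~$\ell$ together with (sk1) again forces $x_{b_{\ell/2}}$ to satisfy $T$. This contradicts (sk2), which makes $S$ and $T$ disjoint. Either branch produces a contradiction, so the formula is unsatisfiable.

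The step I would watch most carefully is the bookkeeping at the shared vertex $\xx_{b_{\ell/2}}$: one must keep straight which of the four literals $P,Q,S,T$ a single value $x_{b_{\ell/2}}$ is forced to satisfy, and verify that the three relations $P\perp T$, $Q\perp S$ (both from (sk3)) and $S\perp T$ (from (sk2)) are exactly the ones that close the two branches. Everything else is routine propagation along the two arcs of the snake; the hypotheses that $\ell$ is even and $\ell\ge 6$ are used only to ensure that the indices $1,\ell/2,\ell/2+1,\ell$ are distinct, so that each clause invoked genuinely joins two different variables.
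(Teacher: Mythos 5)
Your proof is correct and takes essentially the same route as the paper's: a case split on whether $x_{b_{\ell/2}}$ satisfies the literal $(\rho'_{\ell/2},a'_{\ell/2})$, followed by implication propagation around the snake using (sk1) clause by clause, with (sk2) and the two parts of (sk3) closing the two branches exactly as you describe. The only difference is presentational --- your explicit chain $H_1\Rightarrow\dots\Rightarrow H_\ell$ and its contrapositive versus the paper's in-line forward and backward propagation.
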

A snake gives rise to a srongly connected component in the digraph of the formula defined by Chepoi et al.~\cite{ChepoiCreignouHermannSalzer10}, which contains a
literal as well as its negation.  Here, we give the elementary proof of the lemma.
\begin{proof}
  Assume that a satisfying interpretation $\xx \to x$ exists.  We prove that the literal $(\xx_{b_{\ell/2}},\rho'_{\ell/2},a'_{\ell/2})$ can be neither satisfied nor
  violated by~$x$.  Suppose it were satisfied.  Then, by disjointness of the constraint parts, $(\xx_{b_{\ell/2}},\rho_{\ell/2},a_{\ell/2})$ must be violated by~$x$.
  Since there is a clause ``$(\xx_{b_{\ell/2}},\rho_{\ell/2},a_{\ell/2}) \lor (\xx_{b_{\ell/2+1}},\rho'_{\ell/2+1},a'_{\ell/2+1})$'', the later literal must be
  satisfied by~$x$.  Proceeding in this fashion, it follows that $(\xx_{b_{\ell+1}},\rho'_{\ell+1},a'_{\ell+1})$ is satisfied, but, since $b_{\ell+1} = b_{\ell/2}$
  and by disjointness, this implies that $(\xx_{b_{\ell/2}},\rho'_{\ell/2},a'_{\ell/2})$ is violated, a contradiction.
    
  Suppose that $(\xx_{b_{\ell/2}},\rho'_{\ell/2},a'_{\ell/2})$ is violated by~$x$.  Since there is a clause ``$(\xx_{b_{\ell/2-1}},\rho'_{\ell/2+1},a'_{\ell/2-1})
  \lor (\xx_{b_{\ell/2}},\rho_{\ell/2},a_{\ell/2})$'', the literal $(\xx_{b_{\ell/2-1}},\rho'_{\ell/2+1},a'_{\ell/2-1})$ must be satisfied by~$x$.  Proceeding as
  above, we conclude that $(\xx_{b_0},\rho_{0},a_{0})$ is satisfied by~$x$, and hence, since $b_0=b_{\ell+1}$, by disjointness,
  $(\xx_{b_{\ell+1}},\rho'_{{\ell+1}},a'_{{\ell+1}})$ is violated.  Since ``$(\xx_{b_{\ell}},\rho_{\ell+1},a_{\ell}) \lor
  (\xx_{b_{\ell+1}},\rho'_{\ell+1},a'_{\ell+1})$'' is a clause, $(\xx_{b_{\ell}},\rho_{\ell},a_{\ell})$ is satisfied, and hence, eventually, so is
  $(\xx_{b_{\ell/2}},\rho_{\ell/2},a_{\ell/2})$.  By disjointness, then, $(\xx_{b_0},\rho_{0},a_{0})$ is violated, a contradiction.
\end{proof}

Fix a $c > 2$, let
\begin{equation*}
  \ell := 2   \lt\lceil   \frac{\log n}{\log(\nfrac{c}{2})}   \rt\rceil,
\end{equation*}
and denote by $X$ the number of $\ell$-snakes.  We will compute the expectation of~$X$ and of its second factorial moment $X(X-1)$, and find that $\Exp X =
\Omega(n)$ and $\Exp X(X-1) = O( (\Exp X)^2 )$.  From this, by Chebyshev's inequality, we conclude that the probability that no $\ell$-snake exists is $O(1/\Exp X) =
O(\nfrac{1}{n})$.  Thus, a random formula $F_2(n,cn,\infty)$ is a.a.n.\ satisfiable.

The following two lemmas comtain the computations of the moments.

\begin{lemma}
  $\displaystyle \Exp X = \Omega(n)$.
\end{lemma}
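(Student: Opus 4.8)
The plan is to compute $\Exp X$ directly by a first-moment argument that mirrors the expectation computation for bicycles in the proof of Theorem~\ref{thm:2-lb}(a), but now aiming for a matching lower bound rather than an upper bound. An $\ell$-snake is specified by a choice of $\ell$ distinct variables $\xx_{b_1},\dots,\xx_{b_\ell}$ (with the identification $b_0 = b_{\ell/2} = b_{\ell+1}$ built in), together with the corresponding clauses linking consecutive variables and the disjointness conditions on the constraint parts. First I would write $X = \sum_{b} X_b$, where $b$ ranges over the $(n)_\ell = (1+o(1))\,n^\ell$ ordered choices of distinct variables and $X_b$ is the indicator that the snake on $b$ is present. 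Conditioning on $R = (R_j)$ as in the bicycle computation, the probability that the required matching edges realizing the clauses are all present is, up to the Stirling-type factor, governed by $\prod_j (R_j)_{d_j}$, where $d_j$ counts how often $\xx_j$ appears among the snake's slots; by Lemma~\ref{lem:poisson-trick}(2) this expectation is $(1+o(1))(2c)^{\sum_j d_j}$ since $\sum_j d_j = O(1)$ is not the case here---and this is the first point requiring care.

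The subtlety is that $\ell = 2\lceil \log n / \log(c/2)\rceil = \Theta(\log n)$ grows with $n$, so $\sum_j d_j = \Theta(\ell) = \Theta(\log n)$ is \emph{not} $O(1)$, and part~(2) of Lemma~\ref{lem:poisson-trick} does not apply verbatim. I would therefore need to argue that the multiplicative $(1+o(1))$ error, accumulated over $\Theta(\log n)$ factors, still contributes only a subpolynomial correction, so that the leading behaviour $(2c)^{\sum_j d_j}$ survives. Concretely, one expects $\prod_j (R_j)_{d_j} = (1+o(1))^{\ell}(2c)^{2(\ell+1)}$, and since $(1+o(1))^{\Theta(\log n)}$ can be absorbed as $n^{o(1)}$, this is harmless against the eventual $\Omega(n)$ conclusion. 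The disjointness conditions (sk1)--(sk3) are independent of $R$ and of the matching, and each contributes a factor $\nfrac14$ by Lemma~\ref{lem:prb-sat-dj}(b); there are $\ell + O(1)$ such conditions, giving a factor $4^{-\ell}(1+o(1))$. The number of available matching edges is $(2m-1)(2m-3)\cdots$, the same falling-factorial-of-odd-numbers denominator as before.

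Assembling these, I would obtain
\begin{equation*}
  \Exp X = (1+o(1))\, n^\ell \cdot 4^{-\ell} (2c)^{2(\ell+1)} \cdot \frac{1}{(2m-1)(2m-3)\cdots(2m-2\ell-1)}.
\end{equation*}
Using the same Stirling estimate as in the proof of Theorem~\ref{thm:2-lb}(a), the falling factorial in the denominator is $(1+o(1))\,\omega\,(2c)^{\ell}(2c)^\ell / n^\ell$ up to the $\omega = 2m - 2\ell - 1$ factor, so the powers of $n$ cancel and the net power of $(c/2)$ is $\ell$. Tracking constants, the dependence on $\ell$ reduces to a factor of the form $(\nfrac{c}{2})^{\ell}$ divided by $\omega = \Theta(n)$; because $c > 2$, we have $\nfrac{c}{2} > 1$, and the choice $\ell = 2\lceil \log n/\log(\nfrac{c}{2})\rceil$ makes $(\nfrac{c}{2})^{\ell} = \Theta(n^2)$, so that $\Exp X = \Omega(n^2/n) = \Omega(n)$.

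The main obstacle I anticipate is precisely the bookkeeping of these competing exponential factors in $\ell$: one must verify that the $(2c)$-powers from Lemma~\ref{lem:poisson-trick}, the $4^{-\ell}$ from disjointness, the $n^\ell$ from variable choices, and the denominator all combine to leave exactly $(\nfrac{c}{2})^\ell / \Theta(n)$, and that the accumulated $(1+o(1))^{\Theta(\log n)}$ errors remain $n^{o(1)}$. Getting the calibration of $\ell$ to land $\Exp X$ at $\Omega(n)$---rather than $\Omega(n^{1-\eps})$ or $\Omega(n^{1+\eps})$---is the delicate part, and is exactly why the definition of $\ell$ uses $\log n / \log(\nfrac{c}{2})$ with the factor of $2$; the floor/ceiling then only costs a bounded multiplicative factor. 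Once $\Exp X = \Omega(n)$ is established, it feeds into the second-moment (Chebyshev) argument already outlined before the lemma statement.
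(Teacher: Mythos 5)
Your proposal follows essentially the same route as the paper: condition on $R$, count the matchings realizing the snake's clauses via $\prod_j (R_j)_{d_j}$ against the denominator $(2m-1)(2m-3)\cdots(2m-2\ell-1)$, fold in the $4^{-\ell+O(1)}$ disjointness probability from Lemma~\ref{lem:prb-sat-dj}, and observe that the calibration $\ell = 2\lceil\log n/\log(\nfrac{c}{2})\rceil$ leaves $(\nfrac{c}{2})^{\ell}/\Theta(n) = \Omega(n)$. You are also right to flag that Lemma~\ref{lem:poisson-trick}(2) does not apply verbatim since $\sum_j d_j = \Theta(\log n)$ here (a point the paper's one-line proof glosses over), and the patch you sketch is exactly what is needed: the multinomial factorial moment is $(2m)_{2\ell+2}/n^{2\ell+2} = \bigl(1-O(\log^2 n/n)\bigr)(2c)^{2\ell+2}$, so the accumulated error is genuinely $1+o(1)$ and not merely $n^{o(1)}$.
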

\begin{proof}
  As in the previous section, we have
  \begin{equation*}
    \Exp(X\mid R)
    = \sum_{b} \frac{  (R_{b_{\ell/2}}-2)(R_{b_{\ell/2}}-3) \prod_{i=1}^\ell (R_{b_i})_2  }{  4^{\ell+3} (2m-1)\cdots(2m-2\ell-1)  },
  \end{equation*}
  where the sum extends over all possible choices $b\in\{1,\dots,n\}^{\ell}$ identifying the $\ell$ distinct variables in a snake.  Thus, by
  Lemma~\ref{lem:poisson-trick}, we have
  \begin{equation*}
    \Exp X
    = \Theta\Bigl( (\nfrac{c}{2})^\ell \tfrac{1}{2m-2\ell-1} \Bigr)
    = \Theta( \nfrac{1}{n} ).
  \end{equation*}
\end{proof}

\begin{lemma}
  $\displaystyle \Exp X(X-1) = O((\Exp X)^2)$.
\end{lemma}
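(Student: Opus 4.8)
The plan is to compute the second factorial moment $\Exp X(X-1)$ by summing over ordered pairs of distinct $\ell$-snakes and to show that the dominant contribution comes from pairs of snakes that share no matching edges (and hence are "nearly independent"), so that $\Exp X(X-1)$ is asymptotically $(\Exp X)^2$, with the overlapping pairs contributing only lower-order terms. Concretely, I would write
\begin{equation*}
  \Exp X(X-1) = \sum_{(b,b')} \Exp\bigl( \Exp( X_b X_{b'} \mid R ) \bigr),
\end{equation*}
where $b,b' \in \{1,\dots,n\}^\ell$ range over the two choices of $\ell$ distinct variables defining the two snakes, and $X_b$ is the indicator that the snake on $b$ is present. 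As in the first-moment lemma, conditioning on $R$ turns the count of admissible matchings into a falling-factorial expression in the $R_j$, and the disjointness conditions (sk1)--(sk3) again contribute a factor $4^{-(\ell+3)}$ \emph{per snake} by Lemma~\ref{lem:prb-sat-dj}. The key bookkeeping is that for a pair of snakes the exponents $d_j$ now count the number of slots occupied by variable $\xx_j$ across \emph{both} snakes, and I would invoke Lemma~\ref{lem:poisson-trick}(1) to bound $\Exp \prod_j (R_j)_{d_j} \le (2c)^{\sum_j d_j}$.

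Next I would classify the ordered pairs $(b,b')$ by their \emph{overlap pattern}: the number $t$ of variables the two snakes share, and the way their matching edges (clauses) coincide. For the generic case $t = 0$ (variable-disjoint snakes), the two snakes involve $2\ell$ distinct variables and $2(\ell+1)$ distinct matching edges, all $d_j \in \{2,3,4\}$, and the computation factors almost exactly into the square of the single-snake estimate, yielding a contribution $(1+o(1))(\Exp X)^2$; here I would use part~(2) of Lemma~\ref{lem:poisson-trick} to get the sharp constant. For pairs with $t \ge 1$ shared variables or shared edges, I would show the contribution is smaller by a factor that decays in $t$: roughly, sharing $t$ variables removes a factor $(n)_t \approx n^t$ from the number of choices of $b'$ while only saving a bounded power of $2c$ in the numerator, so each such family is down by a factor $\Theta(n^{-t})$ (or better, once one accounts for shared clauses). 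Summing the overlap contributions over all patterns and all admissible $\ell = \Theta(\log n)$ then gives a total of $o((\Exp X)^2)$.

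The main obstacle will be the overlap enumeration. Unlike the bicycle count in the previous section, a snake is a rigid closed structure pinned at $b_0 = b_{\ell/2} = b_{\ell+1}$, so when two snakes share variables the possible intersection patterns (a shared subpath, a shared single clause, shared endpoints, etc.) must be enumerated carefully, and for each pattern one must verify that the loss of $\Theta(n^t)$ in the number of placements genuinely dominates any gain from overlapping edges reducing $\sum_j d_j$. Because $\ell$ grows like $\log n$, I must also check that the sum over overlap patterns --- which a priori has $\mathrm{poly}(\ell)$-many shapes of each "size" --- does not accumulate enough terms to overwhelm the $n^{-t}$ savings; since $\ell = O(\log n)$ and each level-$t$ family carries a factor $\ell^{O(1)} (2c)^{O(1)} n^{-t}$, the geometric decay in $n$ beats the polylog, and the overlaps are controlled. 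Having established $\Exp X = \Omega(n)$ and $\Exp X(X-1) = (1+o(1))(\Exp X)^2 = O((\Exp X)^2)$, the theorem follows from Chebyshev's inequality as indicated before the two lemmas.
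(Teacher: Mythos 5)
Your overall plan coincides with the paper's: expand $\Exp X(X-1)$ over ordered pairs of snakes, condition on $R$, and argue that variable-disjoint pairs contribute $(1+o(1))(\Exp X)^2$ while overlapping pairs are negligible (the paper itself only sketches this step, deferring to \S 9.2 of Janson--{\L}uczak--Ruci\'nski). So the route is the same; the problem lies in your treatment of the overlapping pairs, which you correctly single out as the main obstacle but then resolve with an accounting error.

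You claim that a family of pairs sharing $t$ variables is ``down by a factor $\Theta(n^{-t})$ (or better, once one accounts for shared clauses)''. The parenthetical has the direction of the effect backwards, and the dominant effect of a shared clause is not the reduction of $\sum_j d_j$ (which only changes the numerator bound $(2c)^{\sum_j d_j}$ by a bounded factor per clause, and downwards at that) but the disappearance of one factor $\approx 2m=\Theta(n)$ from the matching-probability denominator $(2m-1)(2m-3)\cdots$: two snakes sharing $s$ clauses require only $2(\ell+1)-s$ matching edges in total, so their joint probability is \emph{larger} by $\Theta(n)^{s}$ than the product of the marginals. Since a contiguous shared run of $s$ clauses pins down only $s+1$ shared variables, the net decay of such a family relative to $(\Exp X)^2$ is not $n^{-(s+1)}$ but only about $n^{-1}(2c)^{-s}4^{u}$, where $u\le s-1$ counts the disjointness conditions from (sk1)--(sk3) that the two snakes have in common and hence must not be charged twice. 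The sum over $s$ converges precisely because $4^{u}(2c)^{-s}\le\tfrac14(\nfrac{2}{c})^{s}$ and $c>2$; and since $s$ may be as large as $\ell+1=\Theta(\log n)$, any per-clause miscount by a constant factor exceeding $1$ accumulates to a positive power of $n$ and destroys the bound. Your write-up, taken literally, compares $n^{-t}$ only against the bounded change in $(2c)^{\sum_j d_j}$ and therefore misses exactly the term that makes the hypothesis $c>2$ essential. (Two smaller points: for variable-disjoint pairs the exponents are $d_j\in\{2,4\}$, not $\{2,3,4\}$; and part (2) of Lemma~\ref{lem:poisson-trick} is stated only for $\sum_j d_j=O(1)$, whereas here $\sum_j d_j=\Theta(\log n)$, so the sharp-constant step needs a separate justification.)
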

\begin{proof}
  We have to compute the expectation of the sum
  \begin{equation*}
    \sum_S \sum_{S'\ne S} X_{S} X_{S'}
  \end{equation*}
  where the sums range over all possible snakes~$S$ and $S'$, respectively, and $X_S$ denotes the indicator variable of the event that the snake~$S$ exists.  Taking
  the expectation, it can be seen that the only non-negligible terms in this sum are those for which $S$ and~$S'$ are supported on disjoint sets of variables (see
  e.g.\ the computation in~\S9.2 of~\cite{JansonLuczakRucinskiBk}).
  
  Whenever $S$ and $S'$ are supported on the disjoint sets of variables, a simple calculation shows that the expectation is $O( (\Exp X)^2 )$.
\end{proof}

This completes the proof of Theorem~\ref{thm:2-ub}(b).

\section{Conclusion and Outlook}\label{sec:conclusion}

What makes random $k$-rSAT intriguing is the presence of a second parameter next to $c = \nfrac{m}{n}$: the cardinality~$v$ of the truth-value set~$V$.  Since the
probability of satisfiability $p_k(n,m,v)$ increases with $v$ (Lemma~\ref{lem:p_monotone}), $c^{\pm}_k(v)$ increases with~$v$, too.  Based on computational
experiments, B\'ejar et al.~\cite{BejarManyaCabiscolFernandezGomes07} predicted that $c_3(v)$ increases logarithmically with~$v$.  This is clearly not the case, by
Theorem~\ref{thm:general-ub}.  However, based on B\'ejar et al.'s data, we conjecture that the functions $c^{\pm}_k(v)$ are strictly concave.

\begin{conjecture*}
  For all $k\ge 2$ and~$v\ge 2$, we have $c^{\pm}_k(v+1)-c^{\pm}_k(v) > c^{\pm}_k(v+2)-c^{\pm}_k(v)$.
\end{conjecture*}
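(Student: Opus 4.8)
First I would read the printed inequality as \emph{strict concavity}, i.e.\ as $c^{\pm}_k(v+1)-c^{\pm}_k(v) > c^{\pm}_k(v+2)-c^{\pm}_k(v+1)$: exactly as typeset it rearranges to $c^{\pm}_k(v+1) > c^{\pm}_k(v+2)$, which contradicts the monotonicity of Proposition~\ref{prop:ckv_monotone}, so a term is surely mistyped. The plan is to settle $k=2$ completely by producing a closed formula for the thresholds and verifying concavity of that formula, and then to isolate what is genuinely open for $k\ge 3$.

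For $k=2$ the point is that the arguments of Sections~\ref{sec:pf-2-lb} and~\ref{sec:pf-2-ub} used the value $\nfrac14$ of Lemma~\ref{lem:prb-sat-dj} only as the probability that two independently drawn constraint parts have disjoint signs. I would recompute this probability on a finite symmetric grid $V=\{\,\nfrac{j}{v-1}\mid j=0,\dots,v-1\,\}$. Two constraints can be disjoint only when they point in opposite directions (probability $\nfrac12$), and a constraint $x\le a$ and a constraint $x\ge 1-a'$ are disjoint exactly when $a+a'<1$; counting the lattice points with $J_1+J_2\le v-2$, where $J_1,J_2$ are uniform on $\{0,\dots,v-2\}$, gives conditional probability $\nfrac{v}{2(v-1)}$, hence disjointness probability
\[
  q(v)=\frac{1}{2}\cdot\frac{v}{2(v-1)}=\frac{v}{4(v-1)},
\]
which tends to $\nfrac14$ as $v\to\infty$. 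Re-running the bicycle count of Section~\ref{sec:pf-2-lb} and the snake first/second-moment computation of Section~\ref{sec:pf-2-ub} with every factor $\nfrac14$ replaced by $q(v)$ turns the geometric base $\nfrac{c}{2}$ into $2c\,q(v)$; the relevant series converges (a.a.s.\ satisfiable) precisely for $2c\,q(v)<1$, while $\Exp X$ diverges and the second-moment bound goes through (a.a.n.\ satisfiable) precisely for $2c\,q(v)>1$. I therefore expect the threshold to exist and to equal
\[
  c^{-}_2(v)=c^{+}_2(v)=\frac{1}{2q(v)}=2-\frac{2}{v}.
\]
The one step needing real care, rather than mere substitution, is that those sections assumed $\abs{V}=\infty$ with distinct right-hand sides; for finite $v$ right-hand sides repeat, so I would recheck that the per-rung disjointness events stay independent (they do, being drawn per slot) and that the Chepoi et al.\ certificate applies verbatim.

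This formula is reassuring: it returns $c^{\pm}_2(2)=1$, the classical $2$-SAT threshold, and $c^{\pm}_2(\infty)=2$, matching Theorem~\ref{thm:2-ub}. Its increments are
\[
  c^{\pm}_2(v+1)-c^{\pm}_2(v)=\frac{2}{v}-\frac{2}{v+1}=\frac{2}{v(v+1)},
\]
which strictly decrease in $v$, so $c^{\pm}_2(v+1)-c^{\pm}_2(v) > c^{\pm}_2(v+2)-c^{\pm}_2(v+1)$ and the (corrected) conjecture holds for $k=2$; equivalently $2-\nfrac{2}{v}$ has second derivative $-4v^{-3}<0$.

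The main obstacle is $k\ge 3$. There, threshold existence $c^{-}_k(v)=c^{+}_k(v)$ is itself open, the bound of Theorem~\ref{thm:general-ub} is not tight, and there is no closed formula to differentiate, so the clean $k=2$ route is unavailable. The only structural handle I see is the monotone coupling behind Lemma~\ref{lem:p_monotone}, in which passing from $v$ to $v+1$ weakens each literal by a small, explicitly distributed amount. A concavity proof would need to upgrade this to a quantitative ``diminishing returns'' statement --- that the marginal gain in satisfiability from the $(v+1)$-st truth value dominates that from the $(v+2)$-nd --- and then transfer it from the probability $p_k(n,cn,v)$ to the thresholds $c^{\pm}_k(v)$. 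Making that transfer rigorous without first establishing a sharp threshold is exactly the hard part, so I would present the exact $k=2$ computation as concrete evidence and leave general $k$ as a separate, largely open problem.
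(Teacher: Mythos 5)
The statement you are addressing is labelled a conjecture and the paper offers no proof of it, so there is no argument of the authors to compare yours against; I can only assess the proposal on its own terms. Your reading of the misprint is surely correct: as typeset the inequality rearranges to $c^{\pm}_k(v+1)>c^{\pm}_k(v+2)$, which is impossible by Proposition~\ref{prop:ckv_monotone}(a), and the surrounding text announces strict concavity, so the last term must be $c^{\pm}_k(v+1)$. But the proposal does not prove the conjecture. The case $k\ge 3$ --- which is the case the conjecture is really about, since it is extracted from B\'ejar et al.'s $3$-rSAT data --- is left entirely open, as you concede; what remains is a programme for $k=2$ only, and even that is a heuristic rather than a proof. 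The satisfiable direction requires the bicycle lemma for finite $v$, which the paper asserts (via Chepoi et al.) only for $V=[0,1]$; the unsatisfiable direction requires the second-moment computation for snakes over a finite truth-value set, and there the difficulty is genuine. For finite $v$ the probability that a fresh uniform sign is disjoint from a \emph{given} sign $T$ is far from constant: for $T=\{x\le j/(v-1)\}$ it equals $\tfrac{v-1-j}{2(v-1)}$, ranging from $\tfrac{1}{2(v-1)}$ up to $\tfrac12$. Pairs of snakes sharing rungs then pick up factors $\Exp\bigl[d(T)^2\bigr]^j$, which strictly exceed $q(v)^{2j}$ by Jensen, so ``replace $\tfrac14$ by $q(v)$ everywhere'' is not a mere substitution; one must actually verify that $\Exp X(X-1)=O((\Exp X)^2)$ survives this heterogeneity. (The first-moment ratio is indeed $2cq(v)$, because the mean matrix of the associated multi-type implication process is rank one --- the newly forced sign is uniform regardless of its parent --- but criticality of the first moment does not by itself locate the threshold.)

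A further point you must confront: your formula $c^{\pm}_2(v)=2-\tfrac2v$ contradicts the paper's own closing conjecture $c^{\pm}_2(2^\lambda+1)=2-2^{-\lambda}$, i.e.\ $2-\tfrac{1}{v-1}$. The two agree at $v=2$ (both give $1$) and at $v=\infty$ (both give $2$), so the sanity checks you cite do not discriminate between them; at $v=3$ they give $\tfrac43$ versus $\tfrac32$, and at most one can be right. Your disjointness count $q(v)=\tfrac{v}{4(v-1)}$ is correct for the paper's model (uniform $a\in V\setminus\{1\}$, and $x\le a$, $x\ge 1-a'$ disjoint iff $a+a'<1$, giving $\binom{v}{2}$ of $(v-1)^2$ lattice points), and the branching heuristic above does favour your value over the authors'; but since neither is proved, presenting the formula as ``reassuring'' without flagging the conflict is a gap in itself. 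Both candidate formulas are strictly concave in $v$, so the conjecture for $k=2$ would follow from either once a threshold at that value is actually established; as it stands, neither half of that equality is proven for any finite $v\ge 3$.
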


Note that $p_k(n,cn,v)$ is not in general concave in~$v$.

B\'ejar et al.~\cite{BejarManyaCabiscolFernandezGomes07} conjecture (for $k=3$) that $c^+_k = c^-_k$, in other words, there is a threshold behaviour.  This can be
rephrased in terms of the dependence on the parameter~$v$:
\begin{question*}
  Does there exists a $v_k^*(c)$ such that $p_k(n,cn,v)=o(1)$ if $v < v^*_k(c)$, and $p_k(n,cn,v)=1-o(1)$ if $v > v^*_k(c)$?
\end{question*}
For example, in the case of 2-rSAT, we know that $p_2(n,\frac32 n, 2) = o(1)$ \cite{ChvatalReed92,DaLaVega92,Goerdt94}, and $p_2(n,\frac32n,\infty) = 1-o(1)$ by
Theorem~\ref{thm:2-lb}, but it is not clear whether the transition happens gradually, or suddenly, at some value $v^*$ between $2$ and~$\infty$.  If~$v^*$ exists,
though, then it must be ``finite'' in the sense that it does not depend on~$n$, cf.~Fig.~\ref{fig:transition}.

In the case of 2-rSAT, for $v=2$ there is a threshold at $c^-_2(2)=c^+_2(2)=1$, and for $v=\infty$, there is a threshold at $c^-_2(\infty)=c^+_2(\infty)=2$.  It
seems likely that this is true for the remaining values of~$v$, too.  In fact, we conjecture the following behaviour.
\begin{conjecture*}
  For all $\lambda=0,1,2,\dots,\infty$, we have
  \begin{equation*}
    c^-_2(2^\lambda+1)=c^+_2(2^\lambda+1) = 2 - 2^{-\lambda} = \sum_{j=0}^{\lambda} 2^{-j}.
  \end{equation*}
\end{conjecture*}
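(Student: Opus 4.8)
The plan is to split the claim into the lower bound $c^-_2(2^\lambda+1)\ge 2-2^{-\lambda}$ (a.a.s.\ satisfiability for $c<2-2^{-\lambda}$) and the upper bound $c^+_2(2^\lambda+1)\le 2-2^{-\lambda}$ (a.a.n.\ satisfiability for $c>2-2^{-\lambda}$). Since adding a clause can only destroy satisfiability, $p_2(n,cn,v)$ is nonincreasing in~$c$, so the set of a.a.s.-satisfiable~$c$ is a down-set and the set of a.a.n.-satisfiable~$c$ is an up-set; hence $c^-_2\le c^+_2$, and the two one-sided bounds, if both established at $2-2^{-\lambda}$, already force the asserted equality. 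The endpoints are in hand: $\lambda=0$ is classical random $2$-SAT with threshold $1=2-2^{0}$ \cite{ChvatalReed92,DaLaVega92,Goerdt94}, and $\lambda=\infty$ is Theorem~\ref{thm:2-lb} with threshold $2$. First I would record the finite-$V$ analogue of Lemma~\ref{lem:prb-sat-dj}(b): writing $N=2^\lambda$ and $a=i/N$, $a'=j/N$, only a ``$\le$''/``$\ge$'' pair can be disjoint, and then disjointness is exactly $i+j<N$, so two independent random constraint parts are disjoint with probability $q_\lambda=\tfrac14\bigl(1+2^{-\lambda}\bigr)$, interpolating $q_0=\tfrac12$ and $q_\infty=\tfrac14$.

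Here lies the crux, and the reason this is stated as a conjecture. Re-running the bicycle count of Section~\ref{sec:pf-2-lb} and the snake count of Section~\ref{sec:pf-2-ub} verbatim, but replacing the factor $\tfrac14$ at each disjointness condition by $q_\lambda$, turns the geometric ratio of the bicycle sum into $2cq_\lambda=\tfrac{c}{2}\bigl(1+2^{-\lambda}\bigr)$; both the bicycle first moment and the snake first moment therefore transition at $c=\tfrac{2}{1+2^{-\lambda}}=\tfrac{2^{\lambda+1}}{2^\lambda+1}$. This equals $2-2^{-\lambda}$ only at $\lambda\in\{0,\infty\}$ and is \emph{strictly smaller} for $1\le\lambda<\infty$. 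Thus the honest output of the moment method is merely $c^-_2(2^\lambda+1)\ge\tfrac{2^{\lambda+1}}{2^\lambda+1}$; in the whole window $\bigl(\tfrac{2^{\lambda+1}}{2^\lambda+1},\,2-2^{-\lambda}\bigr)$ bicycles occur in abundance yet the formula should remain satisfiable, so ``absence of a bicycle'' is far too weak for the lower bound, and the snake second moment --- which concentrates when $V=[0,1]$ --- must therefore fail to concentrate here, owing to the strong positive correlations that the finite, \emph{ordered} value set induces among snakes reusing right-hand sides. Closing the gap between $\tfrac{2^{\lambda+1}}{2^\lambda+1}$ and $2-2^{-\lambda}$ is the entire content of the conjecture.

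For the lower bound I would therefore abandon the ``no bicycle'' route in favour of a direct satisfiability analysis via the order (ladder) encoding: replace each variable $\xx_j$ by the monotone Boolean levels $y_{j,t}=[\xx_j\ge t/2^\lambda]$, turning $F$ into a random $2$-SAT instance on $2^\lambda n$ level-variables together with the deterministic monotonicity clauses $y_{j,t+1}\Rightarrow y_{j,t}$. Satisfiability is then the absence, in the Chepoi et al.~\cite{ChepoiCreignouHermannSalzer10} implication digraph, of a strongly connected component containing a literal together with its negation. I would study this digraph by local weak convergence to a multi-type Galton--Watson process whose types encode the level and direction of a literal, and show that for $c<2-2^{-\lambda}$ the exploration from any fixed literal reaches its negation with probability $o(1/n)$, so that a union bound over the finitely many types excludes a bad component a.a.s. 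The value $2-2^{-\lambda}$ is meant to surface as the precise point where the mean-offspring operator of this process becomes critical. Reproducing the telescoping form $2-2^{-\lambda}=\sum_{j=0}^{\lambda}2^{-j}$ --- rather than the naive $\tfrac{2^{\lambda+1}}{2^\lambda+1}$ --- forces one to exploit the ordering of the levels (the monotonicity clauses simultaneously create implications and suppress genuine contradictions), and pinning the spectral radius to cross~$1$ exactly at $2-2^{-\lambda}$ is the principal obstacle.

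For the upper bound I would identify the dominant obstruction that survives past $2-2^{-\lambda}$: a refinement of the snake adapted to the level structure whose first-moment exponential rate vanishes precisely at $2-2^{-\lambda}$, and whose variance is controlled not by a bare second moment but by small-subgraph conditioning on the short-cycle and repeated-value statistics of the implication digraph that drive the correlations flagged above. Establishing both one-sided bounds at the common value $2-2^{-\lambda}$ then yields $c^-_2=c^+_2=2-2^{-\lambda}$ by the monotonicity remark of the first paragraph. I expect the technical heart of both directions to be the same single phenomenon: quantifying how the finite, ordered truth-value set at once pushes the true threshold up from $\tfrac{2^{\lambda+1}}{2^\lambda+1}$ toward $2-2^{-\lambda}$ and correlates the naive certificates enough to break the elementary moment arguments that suffice at the two endpoints.
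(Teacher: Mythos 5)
The statement you are trying to prove is labelled a \emph{conjecture} in the paper: the authors give no proof, and the only cases they actually establish are the two endpoints ($\lambda=0$ via the classical random 2-SAT threshold \cite{ChvatalReed92,DaLaVega92,Goerdt94} and $\lambda=\infty$ via Theorem~\ref{thm:2-lb}), exactly as you note. So there is no proof in the paper to compare yours against, and the only question is whether your proposal settles the problem. It does not, and you essentially say so yourself. What you do establish is correct and is a genuinely useful diagnosis: the reduction of the asserted equality to the two one-sided bounds via monotonicity in~$c$; the finite-$v$ disjointness probability $q_\lambda=\tfrac14\bigl(1+2^{-\lambda}\bigr)$, which correctly interpolates the $\tfrac12$ of classical 2-SAT and the $\tfrac14$ of Lemma~\ref{lem:prb-sat-dj}; and the observation that substituting $q_\lambda$ into the bicycle and snake counts of Sections \ref{sec:pf-2-lb} and~\ref{sec:pf-2-ub} moves the critical ratio to $2cq_\lambda=1$, i.e.\ to $c=\nfrac{2^{\lambda+1}}{2^{\lambda}+1}$, which is strictly below $2-2^{-\lambda}$ for $1\le\lambda<\infty$. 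This correctly explains why the paper's own techniques stop at the endpoints.

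Everything beyond that, however, is a research programme rather than a proof. For the lower bound, the per-step mean offspring of the natural exploration of the Chepoi et al.\ implication digraph is governed by the same product (number of further occurrences of the current variable) times (disjointness probability), i.e.\ roughly $2c\,q_\lambda$, which is exactly what drives the bicycle count; so the natural criticality point of your proposed multi-type Galton--Watson process is again $\nfrac{2^{\lambda+1}}{2^{\lambda}+1}$. You assert that exploiting the ordering of the levels will push criticality up to $2-2^{-\lambda}$, but you exhibit no offspring matrix and no computation of its spectral radius, and it is precisely this step that would constitute the proof. For the upper bound the situation is harder still: as you observe, in the window $\bigl(\nfrac{2^{\lambda+1}}{2^{\lambda}+1},\,2-2^{-\lambda}\bigr)$ the expected number of snakes diverges while (if the conjecture holds) a.a.s.\ none exists, so any certificate must be heavily non-concentrated there; invoking small-subgraph conditioning on an unspecified refined obstruction names the difficulty without resolving it. Finally, keep in mind that the target value $2-2^{-\lambda}$ is itself only the authors' extrapolation from $c^{\pm}_2(2)=1$, $c^{\pm}_2(\infty)=2$ and B\'ejar et al.'s experimental data; your own first-moment computation is, if anything, evidence that the true threshold might be $\nfrac{2^{\lambda+1}}{2^{\lambda}+1}$ rather than $2-2^{-\lambda}$, and a serious attempt should first decide which of the two (if either) is correct, e.g.\ for $\lambda=1$, $v=3$, where they are $\nfrac43$ versus $\nfrac32$.
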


\providecommand{\bysame}{\leavevmode\hbox to3em{\hrulefill}\thinspace}
\providecommand{\MR}{\relax\ifhmode\unskip\space\fi MR }
\providecommand{\MRhref}[2]{%
  \href{http://www.ams.org/mathscinet-getitem?mr=#1}{#2}
}
\providecommand{\href}[2]{#2}

\end{document}